\newtheorem{corollary}{Corollary}
\newtheorem{lemma}{Lemma}
\newtheorem{proposition}{Proposition}
\newtheorem{theorem}{Theorem}
\theoremstyle{definition}
\newtheorem{definition}{Definition}
\newtheorem{example}{Example}
\newtheorem{remark}{Remark}
\newtheoremstyle{runningex}{3pt}{3pt}{}{}{\itshape}{.}{0.5em}{}
\theoremstyle{runningex}
\newtheorem*{moranex}{Example: Moran process}
\newcommand*\patchAmsMathEnvironmentForLineno[1]{%
	\expandafter\let\csname old#1\expandafter\endcsname\csname #1\endcsname
	\expandafter\let\csname oldend#1\expandafter\endcsname\csname end#1\endcsname
	\renewenvironment{#1}%
	{\linenomath\csname old#1\endcsname}%
	{\csname oldend#1\endcsname\endlinenomath}}%
\newcommand*\patchBothAmsMathEnvironmentsForLineno[1]{%
	\patchAmsMathEnvironmentForLineno{#1}%
	\patchAmsMathEnvironmentForLineno{#1*}}%
\DeclareMathOperator{\Prob}{\mathbb{P}}
\DeclareMathOperator{\E}{\mathbb{E}}
\newcommand{\delsel}{\Delta_{\mathrm{sel}}}
\newcommand{\delhat}{\widehat{\Delta}}
\newcommand{\delhatsel}{\widehat{\Delta}_{\mathrm{sel}}}
\newcommand{\vx}{\mathbf{x}}
\newcommand{\vy}{\mathbf{y}}
\newcommand{\vA}{\mathbf{A}}
\newcommand{\vB}{\mathbf{B}}
\newcommand{\vF}{\mathbf{F}}
\newcommand{\bxi}{\bm{\xi}}
\newcommand{\bB}{\mathbb{B}}
\newcommand{\MSS}{\circlearrowright}
\newcommand{\T}{\intercal}
\newcommand{\eq}[1]{Eq.~(\ref{eq:#1})}
\newcommand{\eqsa}[2]{Eqs.~(\ref{eq:#1})~and~(\ref{eq:#2})}
\newcommand{\eqsc}[2]{Eqs.~(\ref{eq:#1})--(\ref{eq:#2})}
\newcommand{\fig}[1]{Fig.~\ref{fig:#1}}
\newcommand{\cor}[1]{Corollary~\ref{cor:#1}}
\newcommand{\defn}[1]{Definition~\ref{def:#1}}
\newcommand{\ex}[1]{Example~\ref{ex:#1}}
\newcommand{\lem}[1]{Lemma~\ref{lem:#1}}
\newcommand{\prop}[1]{Proposition~\ref{prop:#1}}
\newcommand{\sect}[1]{Section~\ref{sec:#1}}
\newcommand{\tthm}[1]{Theorem~\ref{thm:#1}}
\newcommand{\tab}[1]{Table~\ref{tab:#1}}
\newtheorem*{fixation}{Fixation Axiom}
\newcommand{\fix}{Fixation Axiom}
\title{\begin{center} \bfseries \singlespacing
		Fixation probabilities in evolutionary dynamics under weak selection
\end{center}}
\author{\parbox[c]{16cm}{\onehalfspacing \normalsize \centering ~\\[-0.4cm] Alex McAvoy$^{1,2,3,}$\footnote{ORCID: \texttt{https://orcid.org/0000-0002-9110-4635}} \quad Benjamin Allen$^{4,}$\footnote{ORCID: \texttt{https://orcid.org/0000-0002-9746-7613}} \\ \quad \\
    \small
		$^{1}$Department of Organismic and Evolutionary Biology, Harvard University, Cambridge, MA~02138 \\
		$^{2}$Department of Mathematics, University of Pennsylvania, Philadelphia, PA~19104 \\
		$^{3}$Center for Mathematical Biology, University of Pennsylvania, Philadelphia, PA~19104 \\
		\texttt{amcavoy@sas.upenn.edu} \\ \quad \\ 
		$^{4}$Department of Mathematics, Emmanuel College, Boston, MA~02115 \\
		\texttt{allenb@emmanuel.edu}\\[0.2cm]}
	\date{}
}
\begin{document}
	
\allowdisplaybreaks

\maketitle

\begin{abstract}
In evolutionary dynamics, a key measure of a mutant trait's success is the probability that it takes over the population given some initial mutant-appearance distribution. This ``fixation probability'' is difficult to compute in general, as it depends on the mutation's effect on the organism as well as the population's spatial structure, mating patterns, and other factors. In this study, we consider weak selection, which means that the mutation's effect on the organism is small. We obtain a weak-selection perturbation expansion of a mutant's fixation probability, from an arbitrary initial configuration of mutant and resident types. Our results apply to a broad class of stochastic evolutionary models, in which the size and spatial structure are arbitrary (but fixed). The problem of whether selection favors a given trait is thereby reduced from exponential to polynomial complexity in the population size, when selection is weak. We conclude by applying these methods to obtain new results for evolutionary dynamics on graphs.
\end{abstract}

\section{Introduction}
Many studies of stochastic evolutionary dynamics concern the competition of two types (traits) in a finite population. Through a series of births and deaths, the composition of the population changes over time. Absent recurring mutation, one of the two types will eventually become fixed and the other will go extinct. Such models may incorporate frequency-dependent selection as well as spatial structure or other forms of population structure. The fundamental question is to identify the selective differences that favor the fixation of one type over the other.

This question is typically answered by computing a trait's fixation probability \citep{haldane1927mathematical,moran:MPCPS:1958,kimura:G:1962,patwa:JRSI:2008,traulsen:W:2010,der:TPB:2011,hindersin:JRSI:2014,mccandlish:TPB:2015,hindersin:B:2016} as a function of the initial configuration of the two types. Direct calculation of a mutant's fixation probability is possible in simple models of well-mixed populations \citep{moran:MPCPS:1958,taylor:BMB:2004} or spatially structured populations that are highly symmetric \citep{lieberman:Nature:2005,broom:PRSA:2008,hindersin:JRSI:2014} or small \citep{hindersin:PLOSCB:2015,cuesta2018evolutionary,moller2019exploring,tkadlec2019population}. For large populations, fixation probabilities can sometimes be approximated using diffusion methods \citep{kimura:G:1962,roze2003selection,ewens:S:2004,chen2018wright}.

When selection is weak, perturbative methods can be applied to the computation of fixation probabilities \citep{haldane1927mathematical,nowak:Nature:2004,lessard:JMB:2007}. The first-order effects of selection on fixation probabilities provide information about whether selection favors one trait over another, and, if so, by how much. This perturbative approach is often paired with methods from coalescent theory \citep{rousset:JTB:2003,chen:AAP:2013,van2015social,chen:SR:2016,allen:Nature:2017,allen2020transient}.

Our aim in this work is to generalize the weak-selection method for computing fixation probability to a broad class of evolutionary models. Our main result is a first-order weak-selection expansion of a mutant's fixation probability from any initial condition. This result applies to arbitrary forms of spatial structure and frequency-dependent selection, and the expansion can be computed for any particular model and initial configuration by solving a system of linear equations. Under conditions that apply to most models of interest, the size of this system---and hence the  complexity of computing this expansion---exhibits polynomial growth in the population size.

Our approach is based on a modeling framework developed by \cite{allen:JMB:2014} and \cite{allen:JMB:2019}, which is described in \sect{model}. This framework describes stochastic trait transmission in a population of fixed size, $N$, and spatial structure. This setup leads to a finite Markov chain model of selection. Special cases of this framework include the Moran \citep{moran:MPCPS:1958} and Wright-Fisher models \citep{fisher:OUP:1930,wright:G:1931,imhof:JMB:2006}, as well as evolutionary games in graph-structured populations \citep{ohtsuki:Nature:2006,szabo:PR:2007,nowak:PTRSB:2009,allen:Nature:2017}. We use this framework to define the \emph{degree} of an evolutionary process, which later plays an important role in determining the computational complexity of calculating fixation probabilities.

In \sect{sojourn}, we establish a connection between sojourn times and stationary probabilities. Specifically, we compare the original Markov chain, which is absorbing, to an amended Markov chain, in which the initial configuration $\bxi$ can be re-established from the all-$A$ and all-$B$ states with some probability, $u$. This amended Markov chain is recurrent, and it has a unique stationary distribution. We show that sojourn times for transient states of the original Markov chain are equal to $u$-derivatives, at $u=0$, of stationary probabilities in the amended chain. We also define a set-valued coalescent process that is used in the proof of our main results.

\sect{fixation} proves our main result regarding fixation probabilities. We consider the fixation probability, $\rho_{A}\left(\bxi\right)$, of type $A$ in a population whose initial configuration of $A$ and $B$ is $\bxi$. The intensity of selection, $\delta$, which quantifies selective differences between the two types, is assumed to be sufficiently weak, meaning $\delta\ll 1$. Our main result is a formula for calculating $\rho_{A}\left(\bxi\right)$ to first order in the selection intensity $\delta$. This formula depends on the first-order effects of selection on marginal trait-transmission probabilities together with a set of sojourn times for neutral drift. The latter can be evaluated by solving a linear system of size $O\left(N^{D+1}\right)$, where $D$ is the degree of the process. This linear system is what bounds the complexity in $N$ of calculating the first-order coefficient, $\frac{d}{d\delta}\Big\vert_{\delta =0}\rho_{A}\left(\bxi\right)$.

In \sect{mutappear}, we extend our main result to the case that the initial configuration of $A$ and $B$ is stochastic rather than deterministic. We derive a formula for $\frac{d}{d\delta}\Big\vert_{\delta =0}\E_{\mu_{A}}\left[\rho_{A}\right]$, where $\mu_{A}$ is an arbitrary distribution over initial configurations of $A$ and $B$ (and which can also depend on $\delta$). \sect{comparing} considers relative measures of evolutionary success \citep{tarnita:AN:2014} obtained by comparing $\E_{\mu_{A}}\left[\rho_{A}\right]$ to $\E_{\mu_{B}}\left[\rho_{B}\right]$ when $A$ and $B$ each have their own initial distributions, $\mu_{A}$ and $\mu_{B}$.

Finally, we apply our results to several well-known questions in evolutionary dynamics, particularly on graph-structured populations. \sect{constant} discusses the case of constant fecundity, wherein the reproductive rate of an individual depends on only its own type. A large body of research \citep{lieberman:Nature:2005,broom:PRSA:2008,broom:JSTP:2011,voorhees:PRSA:2013,monk:PRSA:2014,hindersin:PLOSCB:2015,kaveh:RSOS:2015,cuesta2017suppressors,pavlogiannis:CB:2018,moller2019exploring,tkadlec2019population} aims to understand the effects of graph structure on fixation probabilities in this context. Our results provide efficient recipes to calculate fixation probabilities under weak selection. \sect{examples} turns to evolutionary game theory. For a particular prisoner's dilemma game (the ``donation game''; \citealp{sigmund:PUP:2010}), we derive a formula for the fixation probability of cooperation from any starting configuration on an arbitrary weighted graph, generalizing and unifying a number of earlier results \citep{ohtsuki:Nature:2006,taylor:Nature:2007,chen:AAP:2013,allen:EMS:2014,chen:SR:2016,allen:Nature:2017}.

\section{Modeling evolutionary dynamics}\label{sec:model}
We employ a framework previously developed by \cite{allen:JMB:2014} and \cite{allen:JMB:2019} to represent an evolving population with arbitrary forms of spatial structure and frequency dependence. In addition to being described below, all of the notation and symbols we use are outlined in \tab{notationtable}. Although we will use the language of a haploid asexual population, our formalism applies equally well to diploid, haplodiploid, or other populations by considering the alleles to be asexual replicators (the ``gene's-eye view''), as described in \cite{allen:JMB:2019}.

As a source of motivation and a tool to illustrate the general model, we consider the Moran process \citep{moran:MPCPS:1958} as a running example. The Moran process models evolution in an unstructured population consisting of two types, a mutant ($A$) and a resident ($B$), of relative fecundity (reproductive rate) $r$ and $1$, respectively. In a population consisting of $i$ individuals of type $A$ and $N-i$ individuals of type $B$, the probability that type $A$ is selected to reproduce is $ir/\left(ir+N-i\right)$. With probability $\left(N-i\right) /\left(ir+N-i\right)$, type $B$ is selected to reproduce. The offspring, which inherits the type of the parent, then replaces a random individual in the population. Throughout our discussion of the framework below, we return to this simple example repeatedly (with more sophisticated examples following in later sections).

\subsection{Modeling assumptions}\label{sec:modeling_assumptions}
We consider competition between two alleles, $A$ and $B$, in a population of finite size, $N$. The state of the population is given by $\vx\in\left\{0,1\right\}^{N}$, where $x_{i}=1$ (resp. $x_{i}=0$) indicates that individual $i$ has type $A$ (resp. $B$). 

Since we are mainly concerned with the probability of fixation rather than the timescale, we may assume without a loss of generality that the population evolves in discrete time. However, the results reported here can also be applied to continuous-time models in a straightforward manner. In what follows, we assume that the population's state is updated in discrete time steps via replacement events. A replacement event is a pair, $\left(R,\alpha\right)$, where $R\subseteq\left\{1,\dots ,N\right\}$ is the set of individuals who are replaced in a given time step and $\alpha :R\rightarrow\left\{1,\dots ,N\right\}$ is the offspring-to-parent map. For a fixed replacement event, $\left(R,\alpha\right)$, the state of the population at time $t+1$, $\vx^{t+1}$, is obtained from the state of the population at time $t$, $\vx^{t}$, by letting
\begin{align}
\label{eq:stateupdate}
x_{i}^{t+1} &= 
\begin{cases}
\displaystyle x_{\alpha\left(i\right)}^{t} & \displaystyle i\in R , \\
& \\
\displaystyle x_{i}^{t} & \displaystyle i\not\in R .
\end{cases}
\end{align}
We can express such a transition more concisely by defining the extended mapping
\begin{align}
\widetilde{\alpha} &: \left\{1,\dots ,N\right\} \longrightarrow \left\{1,\dots ,N\right\} \nonumber \\
&: i \longmapsto 
\begin{cases}
\displaystyle \alpha\left(i\right) & \displaystyle i\in R , \\
& \\
\displaystyle i & \displaystyle i\not\in R .
\end{cases}
\end{align}
We then have $\vx^{t+1} = \vx_{\widetilde{\alpha}}^{t}$, where $\vx_{\widetilde{\alpha}}$ is the vector whose $i$th component is $x_{\widetilde{\alpha}\left(i\right)}$. 

In state $\mathbf{x}$, we denote by $\left\{p_{\left(R,\alpha\right)}\left(\vx\right)\right\}_{\left(R,\alpha\right)}$ the distribution from which the replacement event is chosen. We call this distribution (as a function of $\vx$) the \emph{replacement rule}. We assume that this replacement rule depends on an underlying parameter $\delta\geqslant 0$ that represents the intensity of selection. Neutral drift corresponds to $\delta=0$, and weak selection is the regime $\delta \ll 1$.

\begin{moranex}
In the Moran process, a slightly advantageous mutant has fecundity $r=1+\delta$ for $\delta\ll 1$. The probability that replacement event $\left(R,\alpha\right)$ is chosen is
\begin{align}
p_{\left(R,\alpha\right)}\left(\vx\right) &= 
\begin{cases}
 \displaystyle \frac{x_{\alpha\left(i\right)}r+1-x_{\alpha\left(i\right)}}{\sum_{j=1}^{N}\left(x_{i}r +1-x_{i}\right)} \frac{1}{N} & \displaystyle R=\left\{i\right\} , \\
 & \\
 \displaystyle 0 & \displaystyle \left| R\right|\neq 1 .
\end{cases}\label{eq:moranReplacementRule}
\end{align}
That is, assuming $R=\left\{i\right\}$, the probability that $\alpha\left(i\right)$ reproduces is proportional to $x_{\alpha\left(i\right)}r +1-x_{\alpha\left(i\right)}$ (which is $r$ if $\alpha\left(i\right)$ has type $A$ and $1$ otherwise). The constant of proportionality is the reciprocal of the total population fecundity, $\sum_{j=1}^{N}\left(x_{i}r +1-x_{i}\right)$. If $\alpha\left(i\right)$ reproduces, then the probability that the offspring replaces $i$ is simply $1/N$.
\end{moranex}

For brevity, we write $\mathbb{B}\coloneqq\left\{0,1\right\}$. Each replacement rule defines a Markov chain on $\mathbb{B}^{N}$, according to \eq{stateupdate}. We let $P_{\vx\rightarrow\vy}$ be the probability of transitioning from state $\vx$ to state $\vy$ in this Markov chain.

We make three assumptions on the replacement rule. The first is that for every $\delta$, there exists at least one individual who can generate a lineage that takes over the entire population. We state this assumption as an axiom:
\begin{fixation}
There exists $i\in\left\{1,\dots ,N\right\}$, $m \geqslant 1$, and a sequence $\left\{\left(R_{k},\alpha_{k}\right)\right\}_{k=1}^{m}$ with
\begin{itemize}

\item $p_{\left(R_{k},\alpha_{k}\right)}\left(\vx\right) >0$ for every $k\in\left\{1,\dots ,m\right\}$ and $\vx\in\mathbb{B}^{N}$;

\item $i\in R_{k}$ for some $k\in\left\{1,\dots ,m\right\}$;

\item for every $j\in\left\{1,\dots ,N\right\}$, we have $\widetilde{\alpha}_{1}\circ\widetilde{\alpha}_{2}\circ\cdots\circ\widetilde{\alpha}_{m}\left(j\right) =i$.

\end{itemize}
\end{fixation}
The requirement that $i\in R_{k}$ for some $k\in\left\{1,\dots ,m\right\}$ guarantees that the individual at $i$ cannot live forever, since otherwise no evolution would occur \citep{allen:JMB:2019}. \cite{allen:JMB:2014} showed that, under the {\fix}, this Markov chain has two absorbing states: the state $\vA$ in which $x_{i}=1$ for every $i$ (all-$A$), and the state $\vB$ in which $x_{i}=0$ for every $i$ (all-$B$). All other states are transient. We denote the set of all transient states by $\mathbb{B}_{\T}^{N}\coloneqq\mathbb{B}^{N}-\left\{\vA ,\vB\right\}$.

Our second assumption is that when $\delta =0$ (neutral drift), the replacement rule does not depend on the state, $\vx$. In this case, we denote the replacement rule by $\left\{p_{\left(R,\alpha\right)}^{\circ}\right\}_{\left(R,\alpha\right)}$. Note that we have removed the dependence on $\vx$. This assumption arises because, under neutral drift, the competing alleles are interchangeable, and so the probabilities of replacement should not depend on how these alleles are distributed among individuals. More generally, in the quantities derived from the replacement rule below (e.g. birth rates and death probabilities), we use the superscript $\circ$ to denote their values under neutral drift.

Our third assumption is that for every $\vx\in\mathbb{B}^{N}$ and every replacement event $\left(R,\alpha\right)$, $p_{\left(R,\alpha\right)}\left(\vx\right)$ is a smooth function of $\delta$ in a small neighborhood of $\delta =0$. This assumption enables a perturbation expansion in the selection strength, $\delta$.

\begin{moranex}
Smoothness in $\delta$ is evident from \eq{moranReplacementRule} whenever $r$ is itself a smooth function of $\delta$. 
\end{moranex}

\subsection{Quantifying selection}\label{sec:quantifying_selection}
Having outlined the class of models under consideration, we now define quantities that characterize natural selection in a given population state $\vx \in \bB^N$. For any $i$ and $j$ in $\left\{1,\dots ,N\right\}$, let $e_{ij}\left(\vx\right)$ be the marginal probability that $i$ transmits its offspring to $j$ in state $\vx$, i.e.
\begin{align}
e_{ij}\left(\vx\right) &\coloneqq \sum_{\substack{\left(R,\alpha\right) \\ j\in R,\,\alpha\left(j\right) =i}} p_{\left(R,\alpha\right)}\left(\vx\right) . \label{eq:marginal}
\end{align}
The birth rate (expected offspring number) of $i$ is $b_{i}\left(\vx\right)\coloneqq\sum_{j=1}^{N}e_{ij}\left(\vx\right)$ and the death probability of $i$ is $d_{i}\left(\vx\right)\coloneqq\sum_{j=1}^{N}e_{ji}\left(\vx\right)$. Using these quantities, we can write the expected change in the frequency of $A$ due to selection as \citep{tarnita:AN:2014,allen:JMB:2019}
\begin{align}
\label{eq:delsel}
\delsel\left(\vx\right) &\coloneqq \sum_{i=1}^{N}x_{i}\left(b_{i}\left(\vx\right) -d_{i}\left(\vx\right)\right) .
\end{align}

Any real-valued function on $\bB^N$ is called a \emph{pseudo-Boolean function} \citep{hammer:S:1968}. Since $e_{ij}\left(\vx\right)$ and its derivative with respect to $\delta$ at $\delta =0$ are pseudo-Boolean functions, for every $i$ and $j$ there is a unique multi-linear polynomial representation \citep{hammer:S:1968,boros:DAM:2002},
\begin{align}
\frac{d}{d\delta}\Bigg\vert_{\delta =0}e_{ij}\left(\vx\right) &= \sum_{I\subseteq\left\{1,\dots ,N\right\}} c_{I}^{ij} \vx_{I} , \label{eq:eij_derivative}
\end{align}
where the $c_{I}^{ij}$ are a collection of real numbers (Fourier coefficients) indexed by the subsets $I \subseteq \left\{1,\dots ,N\right\}$, and $\vx_{I}\coloneqq\prod_{i\in I}x_{i}$. Note that $\vx_{I}$ is a scalar, not a state, with $\vx_{I}=1$ if and only if $x_i=1$ for each $i \in I$. This representation includes the constant term $c_\varnothing^{ij}$, linear terms of the form  $c_{\left\{k\right\}}^{ij}x_k$, quadratic terms of the form $c_{\left\{h, k\right\}}^{ij}x_hx_k$, and so on up through $c_{\left\{1,\ldots,N\right\}}^{ij}x_1 \cdots x_N$. The coefficients $c^{ij}_I$ quantify how genetic assortment among sets of individuals affects the probability that $i$'s offspring replaces $j$, under weak selection.

We let $D_{ij}$ denote the degree of the above representation, defined as the degree of the highest-order nonzero term:
\begin{align}
D_{ij} &\coloneqq \max\left\{ k\ :\ c_{I}^{ij}\neq 0 \textrm{ for some } I\subseteq\left\{1,\dots ,N\right\} \textrm{ with } \left| I\right| = k \right\}.
\end{align}
(In the trivial case that $c_{I}^{ij}=0$ for every $I\subseteq\left\{1,\dots ,N\right\}$, we set $D_{ij}=0$.)

For $I\subseteq\left\{1,\dots ,N\right\}$, let $\mathbf{1}_{I}\in\mathbb{B}^{N}$ denote the state in which $x_{i}=1$ for $i\in I$ and $x_{i}=0$ for $i\not\in I$. By applying a M\"{o}bius transform to \eq{eij_derivative} \citep{grabisch:MOR:2000}, we can express the coefficients $c_{I}^{ij}$ as
\begin{align}
c_{I}^{ij} &= \frac{d}{d\delta}\Bigg\vert_{\delta =0} \sum_{J\subseteq I} \left(-1\right)^{\left| I\right| -\left| J\right|} e_{ij}\left(\mathbf{1}_{J}\right) . \label{eq:mobius}
\end{align}
This expression provides a recipe for calculating the coefficients $c_{I}^{ij}$ directly from $e_{ij}\left(\vx\right)$ for a given process (see \sect{examples}).

We define the degree of the overall evolutionary process, under weak selection, to be the maximal degree in \eq{eij_derivative} as $i$ and $j$ run over all pairs of sites:

\begin{definition}\label{def:degree}
The \emph{degree} of the process under weak selection is $D\coloneqq\max_{1\leqslant i,j\leqslant N}D_{ij}$. 
\end{definition}

\begin{moranex}
In this particular process, we have
\begin{align}
e_{ij}\left(\vx\right) &= \frac{x_{i}r+1-x_{i}}{\sum_{k=1}^{N}\left(x_{k}r+1-x_{k}\right)} \frac{1}{N} .
\end{align}
If $r=r\left(\delta\right)$ is a smooth function of $\delta$ with $r\left(0\right) =1$, then
\begin{align}
\frac{d}{d\delta}\Bigg\vert_{\delta =0}e_{ij}\left(\vx\right) &= \frac{1}{N^{2}}r'\left(0\right)\left( x_{i} -\frac{1}{N}\sum_{k=1}^{N}x_{k}\right) .
\end{align}
The degree of the Moran process is thus $D=1$, with $c_{\varnothing}^{ij}=0$ and
\begin{align}
c_{\left\{k\right\}}^{ij} &= 
\begin{cases}
\displaystyle \frac{1}{N^{2}}\left(1-\frac{1}{N}\right) r'\left(0\right)	& \displaystyle k = i , \\
\displaystyle 	& \\
\displaystyle -\frac{1}{N^{3}} r'\left(0\right)	& \displaystyle k\neq i .
\end{cases}
\end{align}
\end{moranex}
Beyond this linear example, in a degree-two process, $\frac{d}{d\delta}\big|_{\delta=0} e_{ij}\left(\vx\right)$ can be represented as a quadratic function of $x_1, \ldots, x_N$, with terms only of the form $c_\varnothing^{ij}$,  $c_{\left\{k\right\}}^{ij}x_k$, or $c_{\left\{h, k\right\}}^{ij}x_hx_k$. In this case, replacement probabilities under weak selection depend on only pairwise statistics of assortment and not on higher-order associations.

We turn now to fixation probabilities. For $\bxi\in\mathbb{B}^{N}$, let $\rho_{A}\left(\bxi\right)$ (resp. $\rho_{B}\left(\bxi\right)$) be the probability that the state $\vA$ (resp. $\vB$) is eventually reached after starting in state $\bxi$. Since states $\vA$ and $\vB$ are absorbing and all other states are transient, we have $\rho_{B}\left(\bxi\right) =1-\rho_{A}\left(\bxi\right)$ for each $\bxi \in \bB^N$.

In the case of neutral drift ($\delta=0$), we let $\pi_{i}$ be the probability of fixation for type $A$ when starting from state $\mathbf{1}_{\{i\}}$; that is, $\pi_i = \rho_{A}\left(\mathbf{1}_{\{i\}}\right)$ when $\delta=0$. Equivalently, $\pi_{i}$ is the probability, under neutral drift, that $i$ is eventually the ancestor of the entire population. These site-specific fixation probabilities are the unique solution to the system of equations (\citealp{allen:PLOSCB:2015}, Theorem 2; \citealp{allen:JMB:2019}, Theorem 7)
\begin{subequations}
\label{eq:pisystem}
\begin{align}
\sum_{j=1}^{N} e_{ij}^{\circ} \pi_{j} &= \sum_{j=1}^{N} e_{ji}^{\circ} \pi_{i} \quad \left(1\leqslant i\leqslant N\right) ; \\
\sum_{i=1}^{N} \pi_{i} &= 1 .
\end{align}
\end{subequations}
The quantity $\pi_{i}$ can be interpreted as the \emph{reproductive value (RV)} of site $i$ \citep{fisher:OUP:1930,taylor:AN:1990,maciejewski:JTB:2014a,allen:JMB:2019}, in that it quantifies the expected contribution of site $i$ to the future gene pool, under neutral drift. For any state $\vx\in\mathbb{B}^{N}$, the RV-weighted frequency, $\widehat{x}\coloneqq\sum_{i=1}^{N}\pi_{i}x_{i}$, is equal to the probability that $A$ becomes fixed under neutral drift when the process starts in state $\vx$ (\citealp{allen:JMB:2019}, Theorem 7).

\begin{moranex}
The reproductive value of every location is $1/N$ due to the fact that the population is unstructured. In a state with $i$ mutants, the fixation probability of $A$ is thus $i/N$.
\end{moranex}
In later examples, we will see that this distribution need not be uniform when the population is spatially structured.

Reproductive values provide a natural weighting for quantities characterizing selection; for example, we define the RV-weighted birth rates and death probabilities to be $\widehat{b}_{i}\left(\vx\right)\coloneqq\sum_{j=1}^{N}e_{ij}\left(\vx\right)\pi_{j}$ and $\widehat{d}_{i}\left(\vx\right)\coloneqq\sum_{j=1}^{N}e_{ji}\left(\vx\right)\pi_{i}$, respectively. In state $\vx$, the change in reproductive-value-weighted frequency of $A$ due to selection, in one step of the process, is
\begin{align}
\label{eq:delhatsel}
\delhatsel\left(\vx\right) &\coloneqq \sum_{i=1}^{N}x_{i}\left(\widehat{b}_{i}\left(\vx\right) -\widehat{d}_{i}\left(\vx\right)\right) = \sum_{i=1}^{N} \pi_{i} \sum_{j=1}^{N} \left( x_{j}-x_{i}\right) e_{ji}\left(\vx\right)
\end{align}
\citep{allen:JMB:2019}. Since $\widehat{b}_{i}^{\circ}=\widehat{d}_{i}^{\circ}$ for $i=1,\dots ,N$, it follows that $\delhatsel^{\circ}\left(\vx\right) =0$ for every $\vx\in\mathbb{B}^{N}$.

\section{Stationary distributions, sojourn times, and coalescence}\label{sec:sojourn}
We are ultimately interested in quantifying (to first order in $\delta$) the probability $\rho_A\left(\bxi\right)$ that type $A$ reaches fixation from initial state $\bxi$. To do so, we will need to quantify the frequency with which the Markov chain visits a given state $\vx \in \bB^N$ prior to absorption in state $\vA$ or $\vB$. We will describe this frequency in two ways: using sojourn times and using the stationary distribution of an amended Markov chain. These two notions are closely connected, as we prove in Proposition \ref{prop:sojourn} below.

We define the sojourn time $t_{\bxi}\left(\vx\right)$, for $\vx \in\mathbb{B}^{N}$, to be the expected number of visits to $\vx$ prior to hitting $\left\{\vA ,\vB\right\}$ when the process begins in state $\bxi\in\mathbb{B}_{\T}^{N}$. These sojourn times $t_{\bxi}\left(\vx\right)$ are uniquely determined by the recurrence relation
\begin{align}
t_{\bxi}\left(\vx\right) &= 
\begin{cases}
\displaystyle 0 & \displaystyle \vx\in\left\{\vA ,\vB\right\} , \\
 & \\
\displaystyle 1 + \sum_{\vy\in\mathbb{B}^{N}}  t_{\bxi}\left(\vy\right) P_{\vy \rightarrow\vx} & \displaystyle \vx = \bxi , \\
 & \\
\displaystyle \sum_{\vy\in\mathbb{B}^{N}}  t_{\bxi}\left(\vy\right) P_{\vy \rightarrow\vx}  & \displaystyle \vx\not\in\left\{\vA ,\vB ,\bxi\right\} .
\end{cases}
\label{eq:trecur}
\end{align}

Since the transition probabilities are continuously differentiable in $\delta$ in a neighborhood of $\delta =0$, so is $t_{\bxi}\left(\vx\right)$.

It is also helpful to consider an amended Markov chain that can ``reset'' in state $\bxi\in\mathbb{B}_{\T}^{N}$ after one of the monoallelic states, $\vA$ or $\vB$, is reached. We introduce a new parameter $u>0$, and define, for $\vx, \vy \in \bB^N$, the amended transition probabilities
\begin{align}
P_{\vx\rightarrow\vy}^{\MSS\left(\bxi\right)} &\coloneqq 
\begin{cases}
\displaystyle u & \displaystyle \vx\in\left\{\vA ,\vB\right\} ,\ \vy =\bxi , \\
& \\
\displaystyle \left(1-u\right) P_{\vx\rightarrow\vy} & \displaystyle \vx\in\left\{\vA ,\vB\right\} ,\ \vy\neq\bxi , \\
& \\
\displaystyle P_{\vx\rightarrow\vy} & \displaystyle \vx\not\in\left\{\vA, \vB\right\} .
\end{cases} \label{eq:MSS_chain}
\end{align}
Above, $P_{\vx\rightarrow\vy}$ refers to the transition probability in the original Markov chain. Thus the amended chain has the same transition probabilities except that, from either of the monoallelic states $\vA$ or $\vB$, there is probability $u$ to transition to state $\bxi$ (see \fig{mutantFixation}). Since $\bxi$ can be any polymorphic state, and since $\bxi$ is the only polymorphic state that can follow $\vA$ or $\vB$, any interpretation of $u$ as a ``mutation'' is likely tenuous from a biological standpoint. However, this amended chain plays an integral technical role in deriving our main results, which in turn do apply under much more realistic assumptions of mutant appearance (discussed in \sect{mutappear}). In particular, the amended chain is clearly aperiodic, and it follows from the {\fix} that it has a single closed communicating class, and all states not in this class are transient. The amended chain therefore has a unique stationary distribution, which we denote by $\left\{\pi_{\MSS\left(\bxi\right)}\left(\vx\right)\right\}_{\vx \in \bB^N}$, the notation $\MSS\left(\bxi\right)$ indicating regeneration into state $\bxi$.

\begin{figure}
	\centering
	\includegraphics[width=0.75\textwidth]{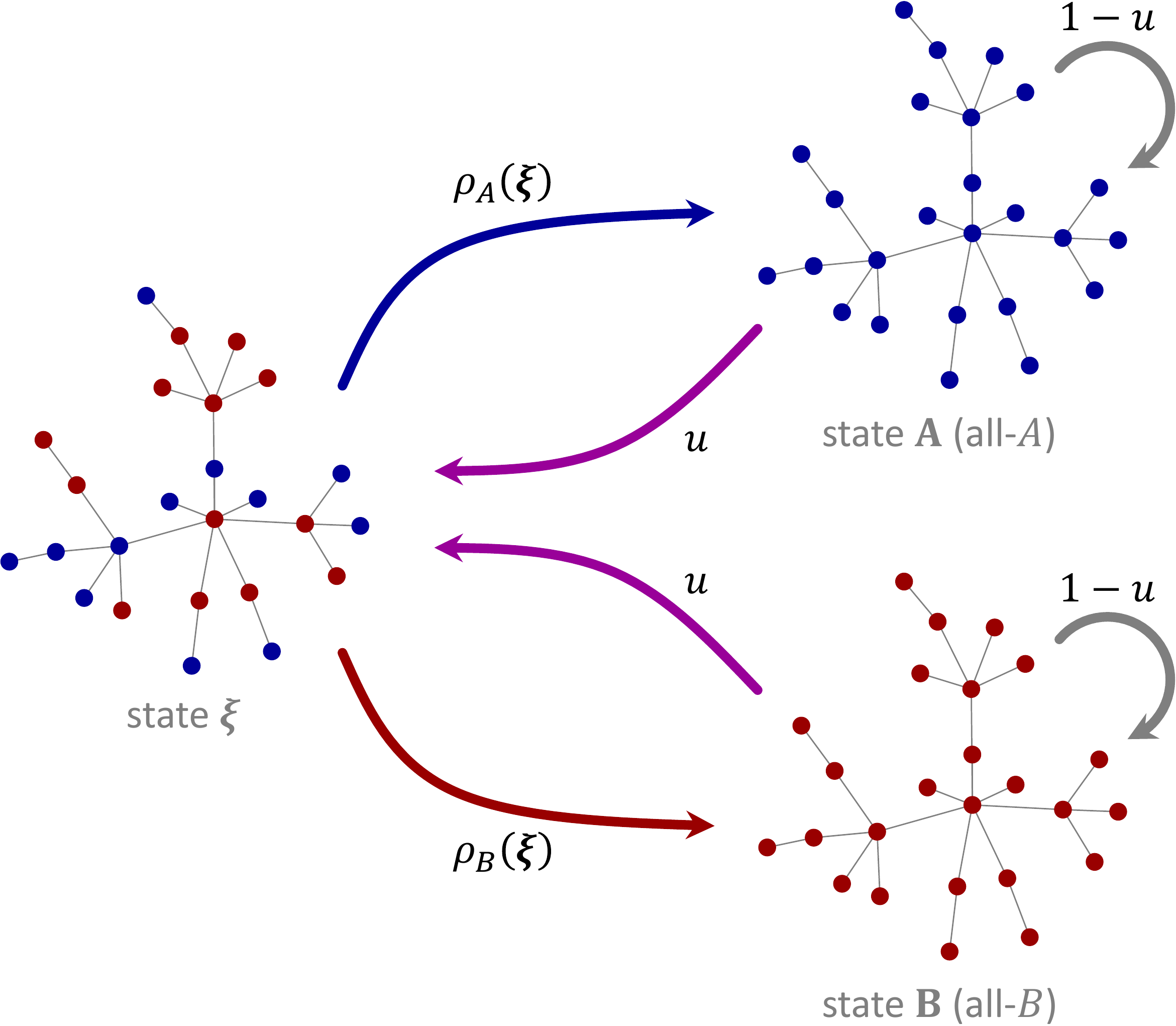}
	\caption{Transitions into a fixed transient state, $\bxi$, following absorption. When starting from a non-monomorphic state, the process will eventually reach one of the two absorbing states (all-$A$ or all-$B$) by the {\fix}. From each absorbing state, the process transitions to $\bxi$ with probability $u\geqslant 0$. This ``artificial'' mutation allows one to focus on the fixation probabilities when the process is started in a fixed initial configuration, $\bxi$, of $A$ and $B$.\label{fig:mutantFixation}}
\end{figure}

Consider now the Markov chain on the monoallelic states whose transition matrix is
\begin{align}
\Lambda \coloneqq \bordermatrix{%
	& \vA & \vB \cr
	\vA &\ \rho_{A}\left(\bxi\right) & \ \rho_{B}\left(\bxi\right) \cr
	\vB &\ \rho_{A}\left(\bxi\right) & \ \rho_{B}\left(\bxi\right) \cr
}.
\end{align}
This chain describes the process in which transitions are first from $\vA$ or $\vB$ to $\bxi$, deterministically, and then from $\bxi$ to $\vA$ with probability $\rho_{A}\left(\bxi\right)$ and to $\vB$ with probability $\rho_{B}\left(\bxi\right)$. This chain is ``embedded'' in the amended chain in the sense that when $u$ is small, the amended chain spends most of its time in $\left\{\vA ,\vB\right\}$, but occasionally it transitions to $\bxi$ before returning to $\left\{\vA ,\vB\right\}$ according to the fixation probabilities. More formally, on a state-by-state basis, the stationary distribution of the embedded chain coincides with $\pi_{\MSS\left(\bxi\right)}$ on the monoallelic states in the limit $u\rightarrow 0$ (\citealp{fudenberg:JET:2006}, Theorem 2), i.e.
\begin{align}
\lim_{u\rightarrow 0} \pi_{\MSS\left(\bxi\right)}\left(\vx\right) &= 
\begin{cases}
\displaystyle \rho_{A}\left(\bxi\right) & \displaystyle \vx =\vA , \\
& \\
& \\ \displaystyle \rho_{B}\left(\bxi\right) & \displaystyle \vx =\vB , \\
& \\
& \\ \displaystyle 0 & \displaystyle \vx\not\in\left\{\vA ,\vB\right\} .
\end{cases} \label{eq:MSS_limit}
\end{align}

The following result, which is key to our methodology, shows that sojourn times of the original chain coincide with the $u$-derivative, at $u=0$, of the stationary distribution for the amended chain:
\begin{proposition}\label{prop:sojourn}
For every non-monoallelic state $\vx\in\mathbb{B}_{\T}^{N}$,
\begin{align}
\frac{d}{du}\Bigg\vert_{u=0}\pi_{\MSS\left(\bxi\right)}\left(\vx\right) &= t_{\bxi}\left(\vx\right) . \label{eq:MSS_sojourn}
\end{align}
\end{proposition}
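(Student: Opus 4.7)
The plan is to derive the equality by starting from the stationarity equation that defines $\pi_{\MSS(\bxi)}$, restricting attention to transient states, rescaling by $u$, and then matching the limit with the sojourn-time recurrence \eqref{eq:trecur}. Because the amended chain is irreducible only for $u>0$ and has two absorbing states at $u=0$, the raw stationary distribution is degenerate in the limit; however, on transient states $\vx\in\mathbb{B}_{\T}^{N}$ we know from \eqref{eq:MSS_limit} that $\pi_{\MSS(\bxi)}(\vx)\to 0$, so the derivative at $u=0$ is captured exactly by the $u\to 0^{+}$ limit of $\pi_{\MSS(\bxi)}(\vx)/u$. This motivates the rescaling.

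Concretely, I would first write out the stationarity condition $\pi_{\MSS(\bxi)}(\vx)=\sum_{\vy}\pi_{\MSS(\bxi)}(\vy)P^{\MSS(\bxi)}_{\vy\to\vx}$ and evaluate it for $\vx\in\mathbb{B}_{\T}^{N}$ using the definition \eqref{eq:MSS_chain}. Since $\vA$ and $\vB$ are absorbing in the original chain, the only contribution from $\vy\in\{\vA,\vB\}$ occurs when $\vx=\bxi$, and it equals $u\bigl(\pi_{\MSS(\bxi)}(\vA)+\pi_{\MSS(\bxi)}(\vB)\bigr)$. The remaining terms reduce to a sum over transient $\vy$ with the original transition probabilities $P_{\vy\to\vx}$. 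Dividing by $u$ and writing $s_{u}(\vx)\coloneqq\pi_{\MSS(\bxi)}(\vx)/u$ yields
\begin{align*}
s_{u}(\vx) &= \mathbf{1}_{\vx=\bxi}\bigl(\pi_{\MSS(\bxi)}(\vA)+\pi_{\MSS(\bxi)}(\vB)\bigr) + \sum_{\vy\in\mathbb{B}_{\T}^{N}} s_{u}(\vy)\,P_{\vy\to\vx}.
\end{align*}
In matrix form this reads $s_{u}(I-Q)=C(u)\,\mathbf{e}_{\bxi}^{\T}$, where $Q$ is the transient-to-transient block of the original chain, $\mathbf{e}_{\bxi}$ is the standard basis vector at $\bxi$, and $C(u)=\pi_{\MSS(\bxi)}(\vA)+\pi_{\MSS(\bxi)}(\vB)$.

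Next, I would invoke the \fix{}: it implies that from every transient state the chain reaches $\{\vA,\vB\}$ with probability one, so $Q$ is substochastic with spectral radius strictly less than one and $I-Q$ is invertible. Hence $s_{u}=C(u)\,\mathbf{e}_{\bxi}^{\T}(I-Q)^{-1}$ for all $u>0$. Using \eqref{eq:MSS_limit} we have $C(u)\to\rho_{A}(\bxi)+\rho_{B}(\bxi)=1$ as $u\to 0^{+}$, so $s_{u}(\vx)\to\bigl[\mathbf{e}_{\bxi}^{\T}(I-Q)^{-1}\bigr](\vx)$. A standard identification shows that the latter equals $t_{\bxi}(\vx)$: the sojourn-time recurrence \eqref{eq:trecur} can be rewritten as $t_{\bxi}(I-Q)=\mathbf{e}_{\bxi}^{\T}$, whose unique solution is $\mathbf{e}_{\bxi}^{\T}(I-Q)^{-1}$. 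Combining these gives $\lim_{u\to 0^{+}}\pi_{\MSS(\bxi)}(\vx)/u=t_{\bxi}(\vx)$, which is the claimed derivative since $\pi_{\MSS(\bxi)}(\vx)=0$ at $u=0$.

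The main obstacle is the apparent singularity of the stationary distribution at $u=0$: the amended chain is reducible there with two stationary measures, so the usual perturbation theory for ergodic chains does not directly apply. I avoid this by working with the rescaled vector $s_{u}$ and the linear system $s_{u}(I-Q)=C(u)\mathbf{e}_{\bxi}^{\T}$, whose coefficient matrix $I-Q$ is invertible independently of $u$, making the $u$-dependence of $s_{u}$ smooth on the right-hand side and giving the limit cleanly. A minor care point is establishing that $\pi_{\MSS(\bxi)}(\vx)$ is differentiable at $u=0$ rather than merely one-sided differentiable; this follows because $s_{u}$ is an analytic function of $u$ on a one-sided neighborhood of $0$ and vanishes at $0$ for transient states, so the right-derivative coincides with the appropriate limit.
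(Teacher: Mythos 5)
Your proposal is correct and follows essentially the same route as the paper: restrict the stationarity equation of the amended chain to transient states, isolate the $u\bigl(\pi_{\MSS\left(\bxi\right)}\left(\vA\right)+\pi_{\MSS\left(\bxi\right)}\left(\vB\right)\bigr)$ source term, use \eq{MSS_limit} to send that factor to $1$, and identify the result with the sojourn-time recurrence \eq{trecur}. The only difference is bookkeeping: where the paper differentiates the stationarity relation and appeals to uniqueness of solutions of \eq{trecur}, you rescale by $u$ and solve explicitly via $\left(I-Q\right)^{-1}$, which amounts to the same thing while making the invertibility and the one-sided differentiability at $u=0$ explicit.
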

\begin{proof}
For $\vx\in\mathbb{B}_{\T}^{N}$, stationarity implies
\begin{align}
\pi_{\MSS\left(\bxi\right)}\left(\vx\right) & = \sum_{\vy \in \mathbb{B}^{N}} \pi_{\MSS\left(\bxi\right)}\left(\vy\right) P_{\vy\rightarrow\vx}^{\MSS\left(\bxi\right)} \nonumber \\
& = \sum_{\vy \in \mathbb{B}_{\T}^{N}} \pi_{\MSS\left(\bxi\right)}\left(\vy\right) P_{\vy\rightarrow\vx} 
+ u \delta_{\bxi,\vx} \left( \pi_{\MSS\left(\bxi\right)}\left(\vA\right) + \pi_{\MSS\left(\bxi\right)}\left(\vB\right)  \right),
\end{align}
where the Kronecker symbol $\delta_{\vx,\bxi}$ is equal to 1 if $\vx=\bxi$ and 0 otherwise. Taking the $u$-derivative at $u=0$, and noting that $\lim_{u\rightarrow 0} \left( \pi_{\MSS\left(\bxi\right)}\left(\vA\right) + \pi_{\MSS\left(\bxi\right)}\left(\vB\right)  \right) = 1$ by \eq{MSS_limit}, we obtain
\begin{align}
\frac{d}{du}\Bigg\vert_{u=0}\pi_{\MSS\left(\bxi\right)}\left(\vx\right)
= \sum_{\vy \in \mathbb{B}_{\T}^{N}} \frac{d}{du}\Bigg\vert_{u=0} \pi_{\MSS\left(\bxi\right)}\left(\vy\right) P_{\vy\rightarrow\vx} 
+  \delta_{\bxi,\vx}.
\end{align}
We observe that $\frac{d}{du}\Big\vert_{u=0}\pi_{\MSS\left(\bxi\right)}\left(\vx\right)$ satisfies the same recurrence relation, \eq{trecur}, as $t_{\bxi}\left(\vx\right)$. Since this recurrence relation uniquely defines the times $t_{\bxi}\left(\vx\right)$, we have \eq{MSS_sojourn}.
\end{proof}

It follows immediately from \prop{sojourn} that $t_{\bxi}\left(\mathbb{B}_{\T}^{N}\right)$, the expected time to absorption when starting from state $\bxi$, is equal to $\lim_{u\rightarrow 0}\pi_{\MSS\left(\bxi\right)}\left(\mathbb{B}_{\T}^{N}\right) /u$. With $\E_{\MSS\left(\bxi\right)}\left[\cdot\right]$ denoting expectation with respect to $\pi_{\MSS\left(\bxi\right)}$, we have the following result:
\begin{corollary} 
\label{cor:MSStime}
For any function $\varphi :\mathbb{B}^{N}\rightarrow\mathbb{R}$ with $\varphi\left(\mathbf{A}\right) =\varphi\left(\mathbf{B}\right) =0$,
\begin{align}
\frac{d}{du}\Bigg\vert_{u=0}\E_{\MSS\left(\bxi\right)}\left[\varphi\right] = \sum_{t=0}^{\infty} \E\left[ \varphi\left(\vx^{t}\right) \mid \vx^{0}=\bxi\right] , \label{eq:MSS_series}
\end{align}
and the sum on the right-hand side converges absolutely.
\end{corollary}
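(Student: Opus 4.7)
The plan is to show that both sides of the claimed identity equal the finite sum $\sum_{\vx \in \mathbb{B}_{\T}^{N}} t_{\bxi}(\vx)\varphi(\vx)$, with \prop{sojourn} handling the stationary-distribution side and a standard interchange of summation handling the sojourn-time side.

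For the left-hand side, I would expand the expectation as $\E_{\MSS(\bxi)}[\varphi] = \sum_{\vx\in\mathbb{B}^{N}} \pi_{\MSS(\bxi)}(\vx)\varphi(\vx)$. Because $\varphi(\vA) = \varphi(\vB) = 0$, only transient states $\vx \in \mathbb{B}_{\T}^{N}$ contribute. Since $\mathbb{B}_{\T}^{N}$ is finite and each $\pi_{\MSS(\bxi)}(\vx)$ is smooth in $u$ in a neighborhood of $0$ (it is a rational function of the transition entries, which are linear in $u$), I can differentiate termwise and apply \prop{sojourn} to get $\sum_{\vx\in\mathbb{B}_{\T}^{N}} t_{\bxi}(\vx)\varphi(\vx)$.

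For the right-hand side, I would rewrite $\E[\varphi(\vx^{t}) \mid \vx^{0}=\bxi] = \sum_{\vx\in\mathbb{B}^{N}} \varphi(\vx)\Prob(\vx^{t}=\vx \mid \vx^{0}=\bxi)$ and aim to interchange the sum over $t$ with the sum over $\vx$. To justify this (and simultaneously establish absolute convergence), I would bound
\begin{align*}
\sum_{t=0}^{\infty}\bigl|\E[\varphi(\vx^{t}) \mid \vx^{0}=\bxi]\bigr|
&\leqslant \|\varphi\|_{\infty} \sum_{t=0}^{\infty}\Prob(\vx^{t}\in\mathbb{B}_{\T}^{N} \mid \vx^{0}=\bxi) \\
&= \|\varphi\|_{\infty}\sum_{\vx\in\mathbb{B}_{\T}^{N}} t_{\bxi}(\vx),
\end{align*}
where in the first step I used $\varphi(\vA)=\varphi(\vB)=0$, and in the last step I used that the sojourn times $t_{\bxi}(\vx)$ (as given by \eq{trecur}) sum to the expected absorption time. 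The \fix\ together with the finiteness of $\mathbb{B}^{N}$ ensures that this expected absorption time is finite, so the bound is finite. With absolute convergence in hand, Fubini justifies interchanging the two sums and the right-hand side also collapses to $\sum_{\vx\in\mathbb{B}_{\T}^{N}} \varphi(\vx) t_{\bxi}(\vx)$.

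The only genuine subtlety is justifying that $\sum_{\vx\in\mathbb{B}_{\T}^{N}} t_{\bxi}(\vx) < \infty$, i.e.\ that the expected absorption time is finite; this is a standard consequence of the \fix\ (which ensures the chain has positive probability of hitting $\{\vA,\vB\}$ within a fixed number of steps from any state, giving geometric tails on the absorption time), and can be invoked rather than re-proved. Everything else is bookkeeping: termwise differentiation on a finite state space for the left-hand side, and Fubini for the right-hand side.
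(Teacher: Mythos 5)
Your proposal is correct and follows essentially the same route as the paper's proof: bound $\sum_{t}\left|\E\left[\varphi\left(\vx^{t}\right)\mid\vx^{0}=\bxi\right]\right|$ by $\max_{\vx}\left|\varphi\left(\vx\right)\right|$ times the (finite) total sojourn time to get absolute convergence, interchange the sums to identify the series with $\sum_{\vx\in\mathbb{B}_{\T}^{N}}\varphi\left(\vx\right)t_{\bxi}\left(\vx\right)$, and invoke \prop{sojourn} to equate this with the $u$-derivative of $\E_{\MSS\left(\bxi\right)}\left[\varphi\right]$. The extra details you supply (smoothness of $\pi_{\MSS\left(\bxi\right)}$ in $u$ for termwise differentiation, and finiteness of the expected absorption time via the \fix) are points the paper leaves implicit, so no substantive difference.
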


\begin{proof}
 $\sum_{t=0}^{\infty}\left|\E\left[ \varphi\left(\vx^{t}\right) \mid \vx^{0}=\bxi\right]\right|$ is bounded by $t_{\bxi}\left(\mathbb{B}_{\T}^{N}\right)\max_{\vx\in\mathbb{B}_{\T}^{N}}\left|\varphi\left(\vx\right)\right|$, so the right-hand side of \eq{MSS_series} converges absolutely. We may therefore rearrange this summation to obtain
\begin{align}
\sum_{t=0}^{\infty} \E\left[ \varphi\left(\vx^{t}\right) \mid \vx^{0}=\bxi\right] &= \sum_{t=0}^\infty \sum_{\vx \in \mathbb{B}_{\T}^{N}} \Prob \left[\vx^{t} = \vx \mid \vx^{0}=\bxi \right] \varphi\left(\vx\right) \nonumber \\
&= \sum_{\vx \in \mathbb{B}_{\T}^{N}} \varphi\left(\vx\right) \sum_{t=0}^{\infty} \Prob \left[\vx^t = \vx \mid \vx^0=\bxi \right] \nonumber \\
&= \sum_{\vx \in \mathbb{B}_{\T}^{N}} \varphi\left(\vx\right) t_{\bxi}\left(\vx\right) \nonumber \\
&= \frac{d}{du}\Bigg\vert_{u=0}\E_{\MSS\left(\bxi\right)}\left[\varphi\right] ,
\end{align}
as desired.
\end{proof}

In light of \cor{MSStime}, we define the operator $\langle \cdot \rangle_{\bxi}$, acting on state functions $\varphi :\mathbb{B}^{N}\rightarrow\mathbb{R}$ with $\varphi\left(\mathbf{A}\right) =\varphi\left(\mathbf{B}\right) =0$, by
\begin{align}
\langle \varphi \rangle_{\bxi} &\coloneqq \frac{d}{du}\Bigg\vert_{u=0}\E_{\MSS\left(\bxi\right)}\left[\varphi\right] = \sum_{t=0}^\infty \E\left[\varphi(\vx^t) \mid \vx^{0}=\bxi\right] .
\end{align}
We will use the notation $\langle \cdot \rangle_{\bxi}^{\circ}$ to indicate that the above expectations are taken under neutral drift ($\delta=0$).

For any function $\varphi: \mathbb{B}^{N}\rightarrow\mathbb{R}$ and any $\widetilde{\alpha}:\left\{1,\dots, N\right\}\rightarrow\left\{1,\dots, N\right\}$, we define $\varphi_{\widetilde{\alpha}}:\mathbb{B}^{N}\rightarrow\mathbb{R}$ by $\varphi_{\widetilde{\alpha}} \left(\vx\right) =\varphi\left(\vx_{\widetilde{\alpha}} \right)$. If $\varphi\left(\vA\right) =\varphi\left(\vB\right) =0$, then
\begin{align}
\left \langle \varphi \right \rangle_{\bxi}^{\circ} &= \varphi\left(\bxi\right) + \sum_{t=0}^\infty \E^{\circ} \left[ \varphi(\vx^{t+1}) \mid \vx^{0}=\bxi \right] \nonumber \\
&= \varphi\left(\bxi\right) + \sum_{t=0}^\infty \sum_{\left(R,\alpha\right)} p_{\left(R,\alpha\right)}^{\circ} \E^{\circ} \left[ \varphi\left(\vx^{t}_{\widetilde{\alpha}}\right) \mid \vx^{0}=\bxi \right] \nonumber \\
&= \varphi\left(\bxi\right) + \sum_{\left(R,\alpha\right)} p_{\left(R,\alpha\right)}^{\circ} \left \langle \varphi_{\widetilde{\alpha}} \right \rangle_{\bxi}^{\circ} ,
\end{align}
which gives the following lemma:
\begin{lemma}\label{lem:RMCexpectation}
For any state function $\varphi :\mathbb{B}^{N}\rightarrow\mathbb{R}$ with $\varphi\left(\vA\right) =\varphi\left(\vB\right) =0$,
\begin{align}
\left \langle \varphi \right \rangle_{\bxi}^{\circ} &= \varphi\left(\bxi\right) + \sum_{\left(R,\alpha\right)} p_{\left(R,\alpha\right)}^{\circ} \left \langle  \varphi_{\widetilde{\alpha}} \right \rangle_{\bxi}^{\circ} . \label{eq:varphiRecurrence}
\end{align}
\end{lemma}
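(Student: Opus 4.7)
The plan is to derive the identity directly from the absolutely convergent series representation established in \cor{MSStime}. Applied to $\varphi$, that corollary gives
\[
\left\langle \varphi \right\rangle_{\bxi}^{\circ} = \sum_{t=0}^{\infty} \E^{\circ}\!\left[\varphi(\vx^{t}) \mid \vx^{0}=\bxi\right] .
\]
The $t=0$ summand contributes exactly $\varphi(\bxi)$, so the task reduces to showing that the tail $\sum_{t=0}^{\infty} \E^{\circ}\!\left[\varphi(\vx^{t+1}) \mid \vx^{0}=\bxi\right]$ equals $\sum_{(R,\alpha)} p_{(R,\alpha)}^{\circ} \langle \varphi_{\widetilde{\alpha}} \rangle_{\bxi}^{\circ}$.

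To accomplish this, I would condition on the replacement event $(R_{t+1},\alpha_{t+1})$ governing the transition from time $t$ to time $t+1$. The neutral-drift assumption ensures that this event is state-independent with distribution $\{p_{(R,\alpha)}^{\circ}\}$, regardless of $\vx^{t}$. Combined with the update rule \eq{stateupdate}, which gives $\vx^{t+1} = (\vx^{t})_{\widetilde{\alpha}_{t+1}}$, so that $\varphi(\vx^{t+1}) = \varphi_{\widetilde{\alpha}}(\vx^{t})$ on the event that the chosen replacement is $(R,\alpha)$, we obtain
\[
\E^{\circ}\!\left[\varphi(\vx^{t+1}) \mid \vx^{0}=\bxi\right] = \sum_{(R,\alpha)} p_{(R,\alpha)}^{\circ}\, \E^{\circ}\!\left[\varphi_{\widetilde{\alpha}}(\vx^{t}) \mid \vx^{0}=\bxi\right] .
\]

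Finally, I would interchange the summations over $t$ and $(R,\alpha)$---justified by the absolute convergence supplied by \cor{MSStime}, which applies uniformly in $(R,\alpha)$ since each $\varphi_{\widetilde{\alpha}}$ is bounded on the finite state space $\bB^{N}$---and recognize the resulting inner series $\sum_{t=0}^{\infty} \E^{\circ}[\varphi_{\widetilde{\alpha}}(\vx^{t}) \mid \vx^{0}=\bxi]$ as another instance of \cor{MSStime} applied to $\varphi_{\widetilde{\alpha}}$. The one hypothesis of that corollary that must be verified for the transformed function is that $\varphi_{\widetilde{\alpha}}(\vA) = \varphi_{\widetilde{\alpha}}(\vB) = 0$; this holds because the monoallelic states are constant in every coordinate, so $\vA_{\widetilde{\alpha}} = \vA$ and $\vB_{\widetilde{\alpha}} = \vB$ for any $\widetilde{\alpha}$. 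The argument is short, and I do not expect a substantive obstacle; the only step requiring care is this invariance of $\vA$ and $\vB$ under $\widetilde{\alpha}$, which is precisely what licenses the second application of \cor{MSStime} to close the recursion.
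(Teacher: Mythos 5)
Your proposal is correct and follows essentially the same route as the paper: the paper's own derivation (given immediately before the lemma) also peels off the $t=0$ term of the series $\left\langle \varphi \right\rangle_{\bxi}^{\circ}=\sum_{t\geqslant 0}\E^{\circ}\left[\varphi\left(\vx^{t}\right)\mid\vx^{0}=\bxi\right]$, conditions on the state-independent neutral replacement event to write $\varphi\left(\vx^{t+1}\right)=\varphi_{\widetilde{\alpha}}\left(\vx^{t}\right)$, and swaps the sums to recognize $\left\langle \varphi_{\widetilde{\alpha}} \right\rangle_{\bxi}^{\circ}$. Your explicit verification that $\vA_{\widetilde{\alpha}}=\vA$ and $\vB_{\widetilde{\alpha}}=\vB$, so that \cor{MSStime} applies to $\varphi_{\widetilde{\alpha}}$, is a point the paper leaves implicit but is entirely correct.
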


\citet{allen:JMB:2019} introduced the \emph{rare-mutation conditional (RMC)} distribution, defined for a state $\vx$ as $\lim_{u\rightarrow 0} \Prob_{\MSS\left(\bxi\right)}\left[\mathbf{X}=\vx \mid \vx \in \mathbb{B}_{\T}^{N}\right]$, where $\Prob_{\MSS\left(\bxi\right)}\left[\cdot\right]$ denotes probability with respect to $\pi_{\MSS\left(\bxi\right)}$. Here, we show that this distribution can be equated to the fraction of time spent in state $\vx$ out of all transient states:
\begin{corollary} \label{cor:MSS_RMC} For each $\vx \in \mathbb{B}_{\T}^{N}$
\begin{align}
\lim_{u\rightarrow 0} \Prob_{\MSS\left(\bxi\right)} \left[\mathbf{X}=\vx \mid \vx \in \mathbb{B}_{\T}^{N}\right] = \frac{t_{\bxi}\left(\vx\right)}{t_{\bxi}\left(\mathbb{B}_{\T}^{N}\right)}.
\end{align}
\end{corollary}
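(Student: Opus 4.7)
The plan is to express the conditional probability as a quotient of stationary weights and then apply a L'Hôpital–style argument supplied by Proposition \ref{prop:sojourn}. By the definition of conditional probability,
\[
\Prob_{\MSS\left(\bxi\right)}\left[\vX = \vx \mid \vX \in \mathbb{B}_{\T}^{N}\right] = \frac{\pi_{\MSS\left(\bxi\right)}\left(\vx\right)}{\pi_{\MSS\left(\bxi\right)}\left(\mathbb{B}_{\T}^{N}\right)},
\]
so the task is to compute the limit of this ratio as $u \to 0$.

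First I would note that both numerator and denominator vanish in the limit. The numerator tends to $0$ directly by \eq{MSS_limit}, since $\vx \in \mathbb{B}_{\T}^{N}$; the denominator then tends to $0$ by summing \eq{MSS_limit} over the (finitely many) transient states. Hence the ratio is indeterminate and must be resolved by inspecting first-order behavior in $u$.

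Next, I would invoke Proposition \ref{prop:sojourn} to evaluate the $u$-derivatives at $u = 0$. For the numerator, the proposition gives $\frac{d}{du}\big|_{u=0}\pi_{\MSS\left(\bxi\right)}\left(\vx\right) = t_{\bxi}\left(\vx\right)$. For the denominator, linearity of differentiation and of the summation over the finite set $\mathbb{B}_{\T}^{N}$ yields $\frac{d}{du}\big|_{u=0}\pi_{\MSS\left(\bxi\right)}\left(\mathbb{B}_{\T}^{N}\right) = t_{\bxi}\left(\mathbb{B}_{\T}^{N}\right)$. This latter quantity is strictly positive, since the second case of the recurrence \eq{trecur} forces $t_{\bxi}\left(\bxi\right) \geqslant 1$. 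A first-order Taylor expansion at $u=0$ then writes both numerator and denominator as $u \cdot t_{\bxi}\left(\cdot\right) + o\left(u\right)$, and dividing and letting $u \to 0$ gives the desired ratio $t_{\bxi}\left(\vx\right)/t_{\bxi}\left(\mathbb{B}_{\T}^{N}\right)$.

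The only mild subtlety is the right-differentiability of $\pi_{\MSS\left(\bxi\right)}$ in $u$ at $u = 0$, where the amended chain degenerates into the absorbing chain and the stationary distribution ceases to be unique. However, this is precisely the content of Proposition \ref{prop:sojourn}, so no further work is needed; the proof here amounts to a short calculation built on that result.
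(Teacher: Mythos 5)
Your argument is correct and is essentially the paper's own proof: both write the conditional probability as $\pi_{\MSS\left(\bxi\right)}\left(\vx\right)/\pi_{\MSS\left(\bxi\right)}\left(\mathbb{B}_{\T}^{N}\right)$ and resolve the $0/0$ limit by dividing numerator and denominator by $u$ and applying Proposition~\ref{prop:sojourn}. Your added remarks (finiteness of the sum for the denominator and $t_{\bxi}\left(\bxi\right)\geqslant 1$ ensuring a nonzero limit) are fine but do not change the route.
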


\begin{proof}
For $\vx \in \mathbb{B}_{\T}^{N}$,
\begin{align}
\lim_{u\rightarrow 0} \Prob_{\MSS\left(\bxi\right)} \left[ \mathbf{X}=\vx \mid \vx \in \mathbb{B}_{\T}^{N} \right] &= \lim_{u\rightarrow 0} \frac{\pi_{\MSS\left(\bxi\right)} \left(\vx\right)}{\pi_{\MSS\left(\bxi\right)} \left(\mathbb{B}_{\T}^{N} \right)} \nonumber \\
&= \frac{ \lim_{u\rightarrow 0} \pi_{\MSS\left(\bxi\right)} \left(\vx\right)/u}{ \lim_{u\rightarrow 0} \pi_{\MSS\left(\bxi\right)} \left(\mathbb{B}_{\T}^{N} \right)/u} \nonumber \\
&= \frac{t_{\bxi}\left(\vx\right)}{t_{\bxi}\left(\mathbb{B}_{\T}^{N}\right)},
\end{align}
by Proposition \ref{prop:sojourn}.
\end{proof}

Finally, we introduce a set-valued Markov chain that will be used in the proof of our main result. The states of this Markov chain are subsets of $\left\{1,\dots ,N\right\}$. From a given state $I \subseteq \left\{1,\dots ,N\right\}$, a new state $I'$ is determined by choosing a replacement event $\left(R,\alpha\right)$ according to the neutral probabilities $p_{\left(R,\alpha\right)}^{\circ}$, and setting $I'=\widetilde{\alpha}\left(I\right)$. This Markov chain, which we denote $\mathcal{C}$, can be understood as a coalescent process \citep{kingman1982coalescent,liggett:S:1985,cox1989coalescing,wakeley2016coalescent}. At each time-step, $\mathcal{C}$ transitions from a set of individuals  $I \subseteq \{1, \ldots, N\}$ to the set $\widetilde{\alpha}\left(I\right)$ of parents of these individuals. (In the case that an individual $i \in I$ is not replaced, the ``parent'' is $i$ itself; that is, $\widetilde{\alpha}\left(i\right) =i$.) Thus, with $\mathcal{C}_{0}=\left\{1,\dots ,N\right\}$, the state of the process after $t$ steps can be understood as the set of ancestors of the current population, at time $t$ before the present.

By the {\fix}, $\mathcal{C}$ has a single closed communicating class consisting only of singleton subsets. (In biological terms, the population's ancestry eventually converges on a common ancestor. The event that $\mathcal{C}$ first reaches a singleton set is called \emph{coalescence}, and the vertex in this singleton set represents the location of the population's most recent common ancestor.)  It follows that $\mathcal{C}$ has a unique stationary distribution concentrated on the singleton subsets. In this stationary distribution, the probability of the singleton set $\{i\}$ in this stationary distribution is given by the reproductive value $\pi_i$.

\section{Fixation probabilities}\label{sec:fixation}
We now prove our main results regarding fixation probabilities. First, we show that the weak-selection expansion of a trait's fixation probability has a particular form:
\begin{theorem} \label{thm:rhoexpand}
For any fixed initial configuration $\bxi\in\mathbb{B}_{\T}^{N}$,
\begin{align}
\label{eq:rhoexpand}
\rho_{A}\left(\bxi\right) & = \widehat{\xi} + \delta\left\langle \frac{d}{d\delta}\Bigg\vert_{\delta =0} \delhatsel \right\rangle_{\bxi}^{\circ} + O\left(\delta^{2}\right) .
\end{align}
\end{theorem}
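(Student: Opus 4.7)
The plan is to use the reproductive-value-weighted frequency $\widehat{X}^{t}\coloneqq\sum_{i}\pi_{i}X_{i}^{t}$ as a tracking quantity: under neutral drift it is a martingale, and in general its one-step expected increment is exactly $\delhatsel(\vx^{t})$. This yields an \emph{exact} identity for $\rho_{A}(\bxi)-\widehat{\xi}$ valid for every $\delta$, which I then expand to first order using $\delhatsel^{\circ}\equiv 0$ together with \cor{MSStime}.

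First I would verify the one-step identity
\begin{align*}
\E\left[\widehat{X}^{t+1}-\widehat{X}^{t}\,\big\vert\,\vx^{t}=\vx\right] = \delhatsel\left(\vx\right) ,
\end{align*}
which is immediate from \eq{stateupdate} and the second form of \eq{delhatsel} after interchanging sums. Since $\widehat{\vA}=1$ and $\widehat{\vB}=0$, the \fix\ gives $\E\left[\widehat{X}^{\infty}\mid\vx^{0}=\bxi\right]=\rho_{A}\left(\bxi\right)$. Telescoping the increments and letting the time horizon tend to infinity produces the exact identity
\begin{align*}
\rho_{A}\left(\bxi\right) = \widehat{\xi} + \sum_{t=0}^{\infty} \E\left[\delhatsel\left(\vx^{t}\right)\mid\vx^{0}=\bxi\right] ,
\end{align*}
valid for all $\delta$ in a neighborhood of zero, with absolute convergence since $\delhatsel$ vanishes at $\vA$ and $\vB$ and the process is absorbed in finite expected time.

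Second, I would Taylor-expand the right-hand side at $\delta =0$. Because $\delhatsel^{\circ}\equiv 0$, the product-rule term obtained by differentiating the law of $\vx^{t}$ multiplies $\delhatsel^{\circ}=0$ and drops out; only the derivative of $\delhatsel$ itself survives, giving
\begin{align*}
\frac{d}{d\delta}\Big\vert_{\delta =0}\E\left[\delhatsel\left(\vx^{t}\right)\mid\vx^{0}=\bxi\right] = \E^{\circ}\left[\frac{d}{d\delta}\Big\vert_{\delta =0}\delhatsel\left(\vx^{t}\right)\,\Big\vert\,\vx^{0}=\bxi\right] .
\end{align*}
Then I would apply \cor{MSStime} to $\varphi\coloneqq\frac{d}{d\delta}\big\vert_{\delta =0}\delhatsel$. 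Since $\delhatsel$ vanishes on $\{\vA,\vB\}$ for every $\delta$, so does $\varphi$, and the corollary's hypothesis is met; summing the identified derivatives over $t$ then recognizes the series as $\langle\varphi\rangle_{\bxi}^{\circ}$, which is precisely the first-order coefficient in \eq{rhoexpand}.

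The main technical obstacle is justifying differentiation under the infinite sum over $t$. Smoothness of the replacement rule in $\delta$ together with preservation of the \fix\ on a neighborhood of $\delta =0$ provides uniform control on $t_{\bxi}\left(\mathbb{B}_{\T}^{N}\right)$ and hence on the tail of the series, supplying a dominating function that permits the interchange of $\frac{d}{d\delta}$ and $\sum_{t=0}^{\infty}$ needed in the second step.
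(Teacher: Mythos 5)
Your proof is correct. The exact identity $\rho_{A}\left(\bxi\right) = \widehat{\xi} + \left\langle \delhatsel \right\rangle_{\bxi}$ that you obtain by telescoping the one-step increments of $\widehat{X}^{t}$ is precisely \eq{diffU}, and your weak-selection expansion of it (using $\delhatsel^{\circ}\equiv 0$, smoothness in $\delta$, and \cor{MSStime} applied to $\varphi=\frac{d}{d\delta}\big\vert_{\delta =0}\delhatsel$, which indeed vanishes on $\left\{\vA,\vB\right\}$ because $\delhatsel$ does for every $\delta$) is the same as the paper's second step. Where you differ is in how \eq{diffU} is reached: the paper's official proof works with the amended chain of \sect{sojourn}, setting the stationary average of $\delhat_{\MSS\left(\bxi\right)}$ to zero and differentiating in the regeneration parameter $u$ at $u=0$ via \eq{MSS_limit} and \prop{sojourn}, whereas your telescoping/martingale argument avoids the amended chain entirely --- and is in fact the alternative derivation the paper sketches immediately after its proof. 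The stationary-distribution route buys a viewpoint the paper reuses later (e.g.\ for comparing fixation probabilities in \sect{comparing}), while your route is more elementary and self-contained, requiring only absorption in finite expected time. One small simplification: the interchange of $\frac{d}{d\delta}$ with the sum over $t$ needs no dominating function, since $\left\langle \delhatsel \right\rangle_{\bxi}=\sum_{\vx\in\mathbb{B}_{\T}^{N}}\delhatsel\left(\vx\right) t_{\bxi}\left(\vx\right)$ is a finite sum whose factors are smooth in $\delta$ (the paper notes smoothness of $t_{\bxi}$), and $\delhatsel^{\circ}\equiv 0$ then gives the first-order coefficient $\left\langle \frac{d}{d\delta}\big\vert_{\delta =0}\delhatsel \right\rangle_{\bxi}^{\circ}$ directly.
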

Theorem~\ref{thm:rhoexpand} generalizes earlier results of \cite{rousset:JTB:2003}, \cite{lessard:JMB:2007}, \cite{chen:AAP:2013}, and \cite{van2015social}. 

\begin{proof}
In the chain defined by \eq{MSS_chain}, the expected change in the reproductive-value-weighted frequency of $A$ in state $\vx$, in one step of the process, is given by
\begin{align}
\delhat_{\MSS\left(\bxi\right)}\left(\vx\right) &= 
\begin{cases}
\displaystyle -u\left(1-\widehat{\xi}\right) & \displaystyle \vx =\vA , \\
& \\
\displaystyle u\widehat{\xi} & \displaystyle \vx =\vB , \\
& \\
\displaystyle \delhatsel\left(\vx\right) & \displaystyle \vx\not\in\left\{\vA ,\vB\right\} .
\end{cases}
\end{align}
Averaging this expected change out over the distribution $\pi_{\MSS\left(\bxi\right)}$ gives
\begin{align}
0 &= \E_{\MSS\left(\bxi\right)}\left[ \delhat_{\MSS\left(\bxi\right)} \right] \nonumber \\
&= \E_{\MSS\left(\bxi\right)}\left[ \delhatsel \right] - u\pi_{\MSS\left(\bxi\right)}\left(\vA\right)\left(1-\widehat{\xi}\right) + u\pi_{\MSS\left(\bxi\right)}\left(\vB\right)\widehat{\xi} .
\end{align}
Differentiating both sides of this equation with respect to $u$ at $u=0$, applying \eq{MSS_limit}, and rearranging, we obtain
\begin{align}
\rho_{A}\left(\bxi\right) = \widehat{\xi} + \left \langle \delhatsel \right \rangle_{\bxi}. \label{eq:diffU}
\end{align}
Since $\delhatsel^{\circ}\left(\vx\right) =0$ for every $\vx\in\mathbb{B}^{N}$, and since the replacement rule is a smooth function of $\delta$ around $0$, we have 
\begin{align}
\left \langle \delhatsel \right \rangle_{\bxi} &= 
\delta \frac{d}{d\delta}\Bigg\vert_{\delta =0} \left \langle \delhatsel \right \rangle_{\bxi} + O\left(\delta^{2}\right) \nonumber \\
&= \delta \left \langle \frac{d}{d\delta}\Bigg\vert_{\delta =0} \delhatsel \right \rangle_{\bxi}^{\circ}+ O\left(\delta^{2}\right) .
\end{align}
The interchange of the $\langle \cdot \rangle_{\bxi}^{\circ}$ operator with the $\delta$-derivative is justified by \cor{MSStime}. Combining this equation with \eq{diffU} completes the proof. 
\end{proof}

Alternatively, Theorem \ref{thm:rhoexpand} can be established by writing
\begin{align}
\rho_{A}\left(\bxi\right) & = \lim_{t\rightarrow \infty} \E \left[\widehat{x}^{t} \mid \vx^{0}=\bxi \right] \nonumber \\
& = \widehat{\xi} + \sum_{t=0}^\infty \E \left[\widehat{x}^{t+1}-\widehat{x}^{t} \mid \vx^{0}=\bxi \right] \nonumber \\
& = \widehat{\xi} + \sum_{t=0}^\infty \E \left[\delhatsel\left(\vx^{t}\right) \mid \vx^{0}=\bxi \right] \nonumber \\
& = \widehat{\xi} + \left \langle \delhatsel \right \rangle_{\bxi} .
\end{align}
This calculation recovers \eq{diffU}, and the rest of the proof follows as above.

Our second main result provides a systematic way to compute the first-order term of the weak-selection expansion, \eq{rhoexpand}:
\begin{theorem}\label{thm:main}
For any fixed initial configuration $\bxi\in\mathbb{B}_{\T}^{N}$,
\begin{align}
\rho_{A}\left(\bxi\right) & = \widehat{\xi} + \delta\left(
\sum_{i=1}^{N} \pi_{i} \sum_{j=1}^{N} \sum_{\substack{I\subseteq\left\{1,\dots ,N\right\} \\ 0\leqslant\left| I\right|\leqslant D_{ji}}} c_{I}^{ji}\left(\eta_{\left\{i\right\}\cup I}^{\bxi}-\eta_{\left\{j\right\}\cup I}^{\bxi}\right) \right) + O\left(\delta^{2}\right) , \label{eq:derivative}
\end{align}
where $\eta^{\bxi}$ is the unique solution to the equations
\begin{subequations}\label{eq:generalRecurrence}
\begin{align}
\label{eq:generalRecurrencea}
\eta_{I}^{\bxi} &= \widehat{\xi}-\bxi_{I} + \sum_{\left(R,\alpha\right)} p_{\left(R,\alpha\right)}^{\circ} \eta_{\widetilde{\alpha}\left(I\right)}^{\bxi} \quad \left(1\leqslant\left| I\right|\leqslant D +1\right) ; \\
\label{eq:generalRecurrenceb}
\sum_{i=1}^{N}\pi_{i} \eta_{\left\{i\right\}}^{\bxi} &= 0 .
\end{align}
\end{subequations}
\end{theorem}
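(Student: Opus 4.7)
The plan is to start from \tthm{rhoexpand} and massage the first-order coefficient $\left\langle \frac{d}{d\delta}\big|_{\delta=0} \delhatsel \right\rangle_{\bxi}^{\circ}$ into the stated form, then characterize the resulting quantities as the unique solution of the linear system \eq{generalRecurrence}.

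For the algebraic reformulation, I would substitute the definition \eq{delhatsel}, differentiate term-by-term (the factor $x_{j} - x_{i}$ is $\delta$-independent), and use the Fourier expansion \eq{eij_derivative} to obtain
\begin{align*}
\frac{d}{d\delta}\Bigg\vert_{\delta=0} \delhatsel\left(\vx\right) &= \sum_{i=1}^{N} \pi_{i} \sum_{j=1}^{N} \sum_{I} c_{I}^{ji}\, \left(x_{j}-x_{i}\right) \vx_{I} .
\end{align*}
The Boolean identity $x_{k} \vx_{I} = \vx_{\left\{k\right\}\cup I}$ (valid since $x_{k}\in\left\{0,1\right\}$) converts each inner factor to $\vx_{\left\{j\right\}\cup I} - \vx_{\left\{i\right\}\cup I}$. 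I would then introduce $\eta_{I}^{\bxi} \coloneqq \left\langle \widehat{x} - \vx_{I} \right\rangle_{\bxi}^{\circ}$ for $1 \leqslant \left| I \right| \leqslant D+1$ (each integrand vanishes on $\vA$ and $\vB$, so $\left\langle \cdot \right\rangle_{\bxi}^{\circ}$ is well-posed), and use the splitting $\vx_{\left\{j\right\}\cup I} - \vx_{\left\{i\right\}\cup I} = \bigl(\widehat{x} - \vx_{\left\{i\right\}\cup I}\bigr) - \bigl(\widehat{x} - \vx_{\left\{j\right\}\cup I}\bigr)$ to turn each pair into $\eta_{\left\{i\right\}\cup I}^{\bxi} - \eta_{\left\{j\right\}\cup I}^{\bxi}$. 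Since $c_{I}^{ji}=0$ for $\left| I \right| > D_{ji}$, only $\left| I \right| \leqslant D+1$ is ever invoked, giving \eq{derivative}.

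To verify the characterization \eq{generalRecurrence}, I would apply \lem{RMCexpectation} to $\varphi = \widehat{x} - \vx_{I}$, obtaining $\eta_{I}^{\bxi} = \left(\widehat{\xi} - \bxi_{I}\right) + \sum_{\left(R,\alpha\right)} p_{\left(R,\alpha\right)}^{\circ} \left\langle \widehat{\vx_{\widetilde{\alpha}}} - \vx_{\widetilde{\alpha}\left(I\right)} \right\rangle_{\bxi}^{\circ}$. To close this back on $\eta_{\widetilde{\alpha}\left(I\right)}^{\bxi}$ I would add and subtract $\widehat{x}$ inside the integrand and use the martingale identity $\sum_{\left(R,\alpha\right)} p_{\left(R,\alpha\right)}^{\circ}\, \widehat{\vx_{\widetilde{\alpha}}} = \widehat{x}$, which is equivalent to the reproductive-value system \eq{pisystem}. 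The leftover is exactly $\sum_{\left(R,\alpha\right)} p_{\left(R,\alpha\right)}^{\circ} \eta_{\widetilde{\alpha}\left(I\right)}^{\bxi}$, yielding \eq{generalRecurrencea}; the normalization \eq{generalRecurrenceb} is immediate from the identity $\sum_{i} \pi_{i} \bigl(\widehat{x} - x_{i}\bigr) \equiv 0$. Uniqueness follows from the coalescent interpretation at the end of \sect{sojourn}: the homogeneous system says $\eta$ is harmonic for $\mathcal{C}$, but $\mathcal{C}$'s unique closed class is the set of singletons (on which the induced random walk is irreducible with stationary distribution $\left\{\pi_{i}\right\}$), so every harmonic function is constant, and \eq{generalRecurrenceb} fixes this constant at zero.

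I expect the main subtlety to be in Step 3: closing the recurrence onto $\eta_{\widetilde{\alpha}\left(I\right)}^{\bxi}$ alone requires the ``error'' $\widehat{\vx_{\widetilde{\alpha}}} - \widehat{x}$ to sum to zero over replacement events, and this cancellation is precisely what forces the RV-weighting in $\widehat{x}$. Without it the recurrence would pull in auxiliary unknowns indexed by arbitrary linear combinations of the $x_{i}$ and would fail to be self-contained among subset-indexed variables of size at most $D+1$; everything else is algebraic bookkeeping and appeals to previously established results.
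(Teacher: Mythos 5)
Your proposal is correct and follows essentially the same route as the paper: expand $\frac{d}{d\delta}\big\vert_{\delta=0}\delhatsel$ via \eq{eij_derivative}, define $\eta_{I}^{\bxi}=\left\langle \widehat{x}-\vx_{I}\right\rangle_{\bxi}^{\circ}$, and close the recurrence with \lem{RMCexpectation} together with the martingale property of $\widehat{x}$ under neutral drift. Your uniqueness argument (harmonic functions of the coalescent chain $\mathcal{C}$ are constant because it has a single closed class, with \eq{generalRecurrenceb} pinning the constant) is just the probabilistic phrasing of the paper's linear-algebra step that the unit eigenvalue of $\mathbf{C}$ is simple with right eigenspace spanned by $\mathbf{1}$, so the two proofs coincide in substance.
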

The term $D$ appearing in this system is the degree of the process under weak selection; see \defn{degree}. To simplify notation, in what follows we occasionally drop the bracket notation when using $\eta$ (e.g. $\eta_{i}^{\bxi}$ instead of $\eta_{\left\{i\right\}}^{\bxi}$).
\begin{proof}
Let us define 
\begin{align}
\label{eq:etadef}
\eta_{I}^{\bxi}\coloneqq \left \langle \widehat{x}-\vx_{I} \right \rangle_{\bxi}^{\circ}.
\end{align}
From Theorem \ref{thm:rhoexpand} and \eqsa{eij_derivative}{delhatsel}, we have
\begin{align}
\frac{d}{d\delta}\Bigg\vert_{\delta =0}\rho_{A}\left(\bxi\right) &= \left \langle \frac{d}{d\delta}\Bigg\vert_{\delta =0} \delhatsel \right \rangle_{\bxi}^{\circ} \nonumber \\
& = \sum_{i=1}^{N} \pi_{i} \sum_{j=1}^{N} \sum_{\substack{I\subseteq\left\{1,\dots ,N\right\} \\ 0\leqslant\left| I\right|\leqslant D_{ji}}} c_{I}^{ji} \left \langle \left( x_{j}-x_{i}\right) \vx_{I} \right \rangle_{\bxi}^{\circ} \nonumber \\
& = \sum_{i=1}^{N} \pi_{i} \sum_{j=1}^{N} \sum_{\substack{I\subseteq\left\{1,\dots ,N\right\} \\ 0\leqslant\left| I\right|\leqslant D_{ji}}} c_{I}^{ji} \left( \eta_{\left\{i\right\}\cup I}^{\bxi}-\eta_{\left\{j\right\}\cup I}^{\bxi}\right) ,
\end{align}
which proves \eq{derivative}. \eq{generalRecurrenceb} follows immediately from the definition of $\eta_{I}$ in \eq{etadef}. To obtain \eq{generalRecurrencea} we apply \lem{RMCexpectation}:
\begin{align}
\eta_{I}^{\bxi} &= \left \langle \widehat{x}-\vx_{I} \right \rangle_{\bxi}^{\circ} \nonumber \\
&= \widehat{\xi} - \bxi_{I} + \sum_{\left(R,\alpha\right)} p_{\left(R,\alpha\right)}^{\circ} \left\langle \widehat{x_{\widetilde{\alpha}}}-\vx_{\widetilde{\alpha}\left(I\right)} \right\rangle_{\bxi}^{\circ} \nonumber \\
&= \widehat{\xi} - \bxi_{I} + \sum_{\left(R,\alpha\right)} p_{\left(R,\alpha\right)}^{\circ} \left\langle \widehat{x}-\vx_{\widetilde{\alpha}\left(I\right)} \right\rangle_{\bxi}^{\circ} \nonumber \\
&= \widehat{\xi}-\bxi_{I} +\sum_{\left(R,\alpha\right)}p_{\left(R,\alpha\right)}^{\circ}\eta_{\widetilde{\alpha}\left(I\right)}^{\bxi} ,
\end{align}
where the penultimate line follows from the fact that $\widehat{x}$ is a martingale under neutral drift \citep[][Theorem $7$]{allen:JMB:2019}.

To prove uniqueness of the solution to \eq{generalRecurrence}, we consider the coalescent Markov chain $\mathcal{C}$ defined in the previous section. Let $\mathbf{C}$ be the transition matrix for $\mathcal{C}$, and let $\mathbf{p}$ be its stationary distribution in vector form, which satisfies $\mathbf{p}\left(I\right) =\pi_{i}$ if $I=\left\{i\right\}$ and $\mathbf{p}\left(I\right) =0$ if $\left| I\right|\neq 1$. By uniqueness of the stationary distribution, $\mathbf{C}$ has a simple unit eigenvalue, with corresponding one-dimensional left and right eigenspaces spanned by $\mathbf{p}^T$ and $\mathbf{1}$, respectively. We observe that \eq{generalRecurrencea} can be written in the form $\left(\mathbf{I}-\mathbf{C}\right)\vy =\mathbf{b}$, where $\vy$ has entries $\eta_{I}^{\bxi}$ and $\mathbf{b}$ has entries $\widehat{\xi}-\bxi_I$. We have already exhibited a solution for $\vy$ in \eq{etadef}; call this solution $\vy_0$. By the above remarks about $\mathbf{C}$, the most general solution is $\vy=\vy_0 + K \mathbf{1}$ for an arbitrary scalar $K$. Now we impose \eq{generalRecurrenceb}, which can be written $\mathbf{p}^T\vy=0$. Since $\mathbf{p}^T\vy_0=0$ and $\mathbf{p}^T\mathbf{1}=1$, we must have $K=0$, which leaves $\vy=\vy_0$ as the unique solution.
\end{proof}

\begin{remark}
For a process of degree $D$, calculating $\frac{d}{d\delta}\Big\vert_{\delta =0}\rho_{A}$ involves solving a linear system of size $O\left(N^{D+1}\right)$ (\eq{generalRecurrence}). The complexity of solving for $\pi_{i}$ and $c_{I}^{ij}$ is negligible in comparison. Since the complexity of solving a linear system of $n$ equations is $O\left(n^{3}\right)$, it follows that calculating $\frac{d}{d\delta}\Big\vert_{\delta =0}\rho_{A}$ is $O\left(N^{3\left(D+1\right)}\right)$.
\end{remark}

Since the quantities $\eta_I$ arise as the solution to a system of equations related to the coalescent Markov chain $\mathcal{C}$, it is natural to ask whether the $\eta_I$  have a coalescent-theoretic interpretation. We show in the following sections that, in some cases when the initial state is chosen from a particular  probability distribution, the $\eta_I$ have a natural interpretation as coalescence times or as branch lengths of a coalescent tree. However, these interpretations do not appear to extend to the case of fixation from an arbitrary initial state $\bxi$. 

\section{Stochastic mutant appearance}\label{sec:mutappear}
Having obtained (in Theorems \ref{thm:rhoexpand} and \ref{thm:main}) a weak-selection expansion for fixation probabilities from a particular state $\bxi$, we now generalize to the case that the initial state is sampled from a probability distribution. Specifically, we introduce the probability measures $\mu_{A}$ and $\mu_{B}$, on $\mathbb{B}_{\T}^{N}$, to describe the state of the process after type $A$ or $B$, respectively, has been introduced into the population. We refer to $\mu_{A}$ and $\mu_{B}$ as \emph{mutant-appearance distributions}, although the initial state could just as well arise by some mechanism other than mutation (migration, experimental manipulation, etc.). These distributions can depend on the intensity of selection, $\delta$, and we assume that they are differentiable at $\delta =0$.

Two mutant-appearance distributions often considered in evolutionary models are \emph{uniform initialization} (a single mutant appears at a uniformly chosen site; \citealp{lieberman:Nature:2005,adlam:PRSA:2015}) and \emph{temperature initialization} (a single mutant appears at a site chosen proportionally to the death rate $d_i$; \citealp{allen:PLOSCB:2015,adlam:PRSA:2015}). To define these formally, we define the \emph{complement} $\overline{\vx}$ of a state $\vx\in\mathbb{B}^{N}$ by $\overline{x}_{i}\coloneqq 1-x_{i}$ for $i=1,\dots ,N$.

\begin{example}[Uniform initialization]\label{ex:uniform}

\begin{align}
\mu_{A}\left(\mathbf{1}_{i}\right) &= \mu_{B}\left(\overline{\mathbf{1}}_{i}\right) = \frac{1}{N} \quad \left(1\leqslant i\leqslant N\right) .
\end{align}
\end{example}

\begin{example}[Temperature initialization]\label{ex:temperature}
\begin{subequations}
\begin{align}
\mu_{A}\left(\mathbf{1}_{i}\right) &= \frac{d_{i}\left(\vB\right)}{\sum_{j=1}^{N}d_{j}\left(\vB\right)} \quad \left(1\leqslant i\leqslant N\right) ; \\
\mu_{B}\left(\overline{\mathbf{1}}_{i}\right) &= \frac{d_{i}\left(\vA\right)}{\sum_{j=1}^{N}d_{j}\left(\vA\right)} \quad \left(1\leqslant i\leqslant N\right) .
\end{align}
\end{subequations}
Unlike uniform initialization, temperature initialization opens up the possibility that the mutant-appearance distributions depend on the intensity of selection, $\delta$.
\end{example}

We call a mutant-appearance distribution \emph{symmetric} if it does not distinguish between the two types:

\begin{definition}
We say that $\mu_{A}$ and $\mu_{B}$ are \emph{symmetric} if $\mu_{A}\left(\overline{\bxi}\right) =\mu_{B}\left(\bxi\right)$ for every $\bxi\in\mathbb{B}_{\T}^{N}$.
\end{definition}

Uniform initialization (\ex{uniform}) gives symmetric $\mu_{A}$ and $\mu_{B}$ by definition. If mutant initialization is temperature-based (\ex{temperature}), then $\mu_{A}$ and $\mu_{B}$ are symmetric when $d_{i}\left(\vA\right) =d_{i}\left(\vB\right)$ for $i=1,\dots ,N$. This condition is obviously satisfied under neutral drift ($\delta =0$) but could be violated when $\delta >0$.

The expected fixation probabilities for $A$ and $B$, initialized according to $\mu_{A}$ and $\mu_{B}$, respectively, are
\begin{subequations}
\begin{align}
\E_{\mu_{A}}\left[\rho_{A}\right] &= \sum_{\bxi\in\mathbb{B}_{\T}^{N}} \mu_{A}\left(\bxi\right) \rho_{A}\left(\bxi\right) ; \\
\E_{\mu_{B}}\left[\rho_{B}\right] &= \sum_{\bxi '\in\mathbb{B}_{\T}^{N}} \mu_{B}\left(\bxi '\right) \rho_{B}\left(\bxi '\right) .
\end{align}
\end{subequations}
Since $\rho_{A}^{\circ}\left(\bxi\right) =\widehat{\xi}$ and $\rho_{B}^{\circ}\left(\bxi\right) =1-\widehat{\xi}$ when $\delta =0$, we have $\E_{\mu_{A}}^{\circ}\left[\rho_{A}^{\circ}\right] =\E_{\mu_{A}}^{\circ}\left[\widehat{\xi}\right]$ and $\E_{\mu_{B}}^{\circ}\left[\rho_{B}^{\circ}\right] =1-\E_{\mu_{B}}^{\circ}\left[\widehat{\xi}\right]$. More generally, we have $\E_{\mu_{B}}\left[\rho_{B}\right] =1-\E_{\mu_{B}}\left[\rho_{A}\right]$.

As an immediate consequence of Theorem~\ref{thm:main}, we obtain the following result for the fixation probability of a given type from a given mutant appearance distribution:
\begin{corollary}
\label{cor:mutappear}
If $\mu_{A}$ is a mutant-appearance distribution for $A$, then
\begin{align}
\frac{d}{d\delta}\Bigg\vert_{\delta =0}\E_{\mu_{A}}\left[\rho_{A}\right] &= \frac{d}{d\delta}\Bigg\vert_{\delta =0}\E_{\mu_{A}}\left[\widehat{\xi}\right] + \sum_{i=1}^{N} \pi_{i} \sum_{j=1}^{N} \sum_{\substack{I\subseteq\left\{1,\dots ,N\right\} \\ 0\leqslant\left| I\right|\leqslant D_{ji}}} c_{I}^{ji}\left(\eta_{\left\{i\right\}\cup I}^{\mu_{A}}-\eta_{\left\{j\right\}\cup I}^{\mu_{A}}\right) ,
\end{align}
where $\eta^{\mu_{A}}$ is the unique solution to the equations
\begin{subequations} \label{eq:etamuA}
\begin{align}
\label{eq:etamuAa}
\eta_{I}^{\mu_{A}} &= \E_{\mu_{A}}^{\circ}\left[\widehat{\xi}-\bxi_{I} \right] +\sum_{\left(R,\alpha\right)} p_{\left(R,\alpha\right)}^{\circ} \eta_{\widetilde{\alpha}\left(I\right)}^{\mu_{A}} \quad \left(1\leqslant\left| I\right|\leqslant D+1\right) ; \\
\label{eq:etamuAb}
\sum_{i=1}^{N}\pi_{i}\eta_{i}^{\mu_{A}} &= 0 .
\end{align}
\end{subequations}
Similarly, since $\frac{d}{d\delta}\Big\vert_{\delta =0}\E_{\mu_{B}}\left[\rho_{B}\right] =-\frac{d}{d\delta}\Big\vert_{\delta =0}\E_{\mu_{B}}\left[\rho_{A}\right]$, we have
\begin{align}
\frac{d}{d\delta}\Bigg\vert_{\delta =0}\E_{\mu_{B}}\left[\rho_{B}\right] &= -\frac{d}{d\delta}\Bigg\vert_{\delta =0}\E_{\mu_{B}}\left[\widehat{\xi}\right] -\sum_{i=1}^{N} \pi_{i} \sum_{j=1}^{N} \sum_{\substack{I\subseteq\left\{1,\dots ,N\right\} \\ 0\leqslant\left| I\right|\leqslant D_{ji}}} c_{I}^{ji}\left(\eta_{\left\{i\right\}\cup I}^{\mu_{B}}-\eta_{\left\{j\right\}\cup I}^{\mu_{B}}\right) ,
\end{align}
where $\eta^{\mu_{B}}$ is the unique solution to the equations
\begin{subequations}
\begin{align}
\eta_{I}^{\mu_{B}} &= \E_{\mu_{B}}^{\circ}\left[\widehat{\xi}-\bxi_{I} \right] +\sum_{\left(R,\alpha\right)} p_{\left(R,\alpha\right)}^{\circ} \eta_{\widetilde{\alpha}\left(I\right)}^{\mu_{B}} \quad \left(1\leqslant\left| I\right|\leqslant D+1\right) ; \\
\sum_{i=1}^{N}\pi_{i}\eta_{i}^{\mu_{B}} &= 0 .
\end{align}
\end{subequations}
\end{corollary}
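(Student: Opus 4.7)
The plan is to average Theorem~\ref{thm:main} over the distribution $\mu_{A}$, taking care that $\mu_{A}$ itself may depend on $\delta$. Start by expanding
\begin{align*}
\E_{\mu_{A}}\!\left[\rho_{A}\right] \;=\; \sum_{\bxi\in\mathbb{B}_{\T}^{N}} \mu_{A}\!\left(\bxi\right)\rho_{A}\!\left(\bxi\right)
\end{align*}
and applying the product rule at $\delta=0$. Since $\rho_{A}^{\circ}\!\left(\bxi\right) =\widehat{\xi}$ and $\widehat{\xi}$ is a neutral-drift quantity (independent of $\delta$), the term arising from differentiating $\mu_{A}$ collapses to $\frac{d}{d\delta}\big|_{\delta =0}\E_{\mu_{A}}\!\left[\widehat{\xi}\right]$, accounting for the first summand in the claimed formula. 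The remaining piece is $\E_{\mu_{A}}^{\circ}\!\left[\frac{d}{d\delta}\big|_{\delta =0}\rho_{A}\right]$, and substituting Theorem~\ref{thm:main} into this expectation and interchanging the outer $\E_{\mu_{A}}^{\circ}$ with the finite sums over $i,j,I$ yields an expression in which the only $\bxi$-dependent quantities are the $\eta_{\{i\}\cup I}^{\bxi}$.

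The next step is to define
\begin{align*}
\eta_{I}^{\mu_{A}} \;\coloneqq\; \E_{\mu_{A}}^{\circ}\!\left[\eta_{I}^{\bxi}\right]
\end{align*}
and verify that this definition satisfies the system \eq{etamuA}. Applying $\E_{\mu_{A}}^{\circ}$ term-by-term to \eq{generalRecurrencea} is straightforward: the left-hand side becomes $\eta_{I}^{\mu_{A}}$, the inhomogeneity $\widehat{\xi}-\bxi_{I}$ becomes $\E_{\mu_{A}}^{\circ}\!\left[\widehat{\xi}-\bxi_{I}\right]$, and the neutral-drift replacement probabilities $p_{\left(R,\alpha\right)}^{\circ}$ commute with the $\mu_{A}$-expectation (they do not depend on $\bxi$), producing exactly \eq{etamuAa}. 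Similarly, averaging \eq{generalRecurrenceb} over $\mu_{A}$ gives \eq{etamuAb}.

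For uniqueness, I would quote verbatim the argument at the end of the proof of Theorem~\ref{thm:main}: the recurrence \eq{etamuAa} can be written $\left(\mathbf{I}-\mathbf{C}\right)\vy =\mathbf{b}$ for the coalescent transition matrix $\mathbf{C}$, whose unit eigenvalue is simple with left eigenvector $\mathbf{p}^{\T}$; the normalization \eq{etamuAb} then picks out the unique solution $\vy =\vy_{0}$. Combining these pieces gives the first displayed formula. The formula for $\E_{\mu_{B}}\!\left[\rho_{B}\right]$ follows by writing $\rho_{B}=1-\rho_{A}$, which yields $\frac{d}{d\delta}\big|_{\delta =0}\E_{\mu_{B}}\!\left[\rho_{B}\right] =-\frac{d}{d\delta}\big|_{\delta =0}\E_{\mu_{B}}\!\left[\rho_{A}\right]$, and then repeating the above reasoning verbatim with $\mu_{B}$ in place of $\mu_{A}$ and an overall sign flip.

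This proof is essentially bookkeeping: the only subtlety is recognizing that the $\delta$-dependence of $\mu_{A}$ contributes a separate derivative term (not absorbed into Theorem~\ref{thm:main}), and that this term can be cleanly identified as $\frac{d}{d\delta}\big|_{\delta =0}\E_{\mu_{A}}\!\left[\widehat{\xi}\right]$ precisely because the weights $\pi_{i}$ defining $\widehat{\xi}$ are neutral-drift quantities. Everything else reduces to linearity of expectation and the uniqueness statement already established.
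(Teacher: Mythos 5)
Your proposal is correct and matches the paper's intent exactly: the paper presents this corollary as an immediate consequence of Theorem~\ref{thm:main} (with no separate proof), and your argument simply supplies the bookkeeping---product rule in $\delta$, identifying the $\mu_{A}$-derivative term via $\rho_{A}^{\circ}\left(\bxi\right)=\widehat{\xi}$, averaging the $\eta$-system by linearity, and reusing the uniqueness argument. Nothing is missing.
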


In the case of uniform initialization (\ex{uniform}), we have $\E_{\mu_A}\left[\xi_{i}\right] =1/N$ for all $i\in\left\{1,\dots, N\right\}$. In particular, $\E_{\mu_{A}}^{\circ}\left[\widehat{\xi}-\xi_{i}\right] =0$ for every $i$. Since $\eta^{\mu_A}_{i}=0$ for $i=1,\dots ,N$ is then a solution to \eq{etamuAa}, and since this solution satisfies \eq{etamuAb}, it must be the unique solution to \eq{etamuA} in the case that $\left| I\right| =1$ (see the proof of \tthm{main}). (This argument is used repeatedly in later examples.) Furthermore, $\E_{\mu_{A}}\left[\bxi_{I}\right] =0$ for all $\left| I\right|\geqslant 2$. \eq{etamuA} then simplifies to 
\begin{align}
\eta^{\mu_A}_{I} = 
\begin{cases}
\displaystyle \frac{1}{N} +\sum_{\left(R,\alpha\right)} p_{\left(R,\alpha\right)}^{\circ} \eta^{\mu_A}_{\widetilde{\alpha}\left(I\right)} & \displaystyle 2\leqslant\left| I\right|\leqslant D+1, \\
& \\
\displaystyle 0 & \displaystyle \left| I\right| =1.
\end{cases}
\end{align}
In this case, $\eta^{\mu_A}_{I}$ is equal to $1/N$ times the expected number of steps for the coalescent Markov chain $\mathcal{C}$ to reach a singleton set from initial set $I$. In biological terms, $N\eta^{\mu_A}_I$ is the expected time for the lineages of the individuals in set $I$ to coalesce at a most recent common ancestor.

The overall weak-selection expansion of a trait's fixation probability, in the case of uniform initialization, becomes
\begin{align}
\E_{\mu_A} \left[\rho_A \right] = \frac{1}{N}
+ \delta \sum_{i=1}^{N} \pi_{i} \sum_{j=1}^{N} \sum_{\substack{I\subseteq\left\{1,\dots ,N\right\} \\ 0\leqslant\left| I\right|\leqslant D_{ji}}} c_{I}^{ji}\left(\eta_{\left\{i\right\}\cup I}^{\mu_{A}}-\eta_{\left\{j\right\}\cup I}^{\mu_{A}}\right) 
+ O\left( \delta^2 \right).
\end{align}

\section{Comparing fixation probabilities}\label{sec:comparing}

When evaluating which of two competing types is favored by selection, it is common to compare their fixation probabilities. Specifically, the success of $A$ relative to $B$ is often quantified by whether $\E_{\mu_{A}}\left[\rho_{A}\right] >\E_{\mu_{B}}\left[\rho_{B}\right]$ ($A$ is favored) or $\E_{\mu_{A}}\left[\rho_{A}\right] <\E_{\mu_{B}}\left[\rho_{B}\right]$ ($B$ is favored) \citep{nowak:Nature:2004,allen:JMB:2014,tarnita:AN:2014}. This comparison is natural when the mutant-appearance distributions, $\mu_{A}$ and $\mu_{B}$, are symmetric.

\begin{figure}
	\centering
	\includegraphics[width=0.75\textwidth]{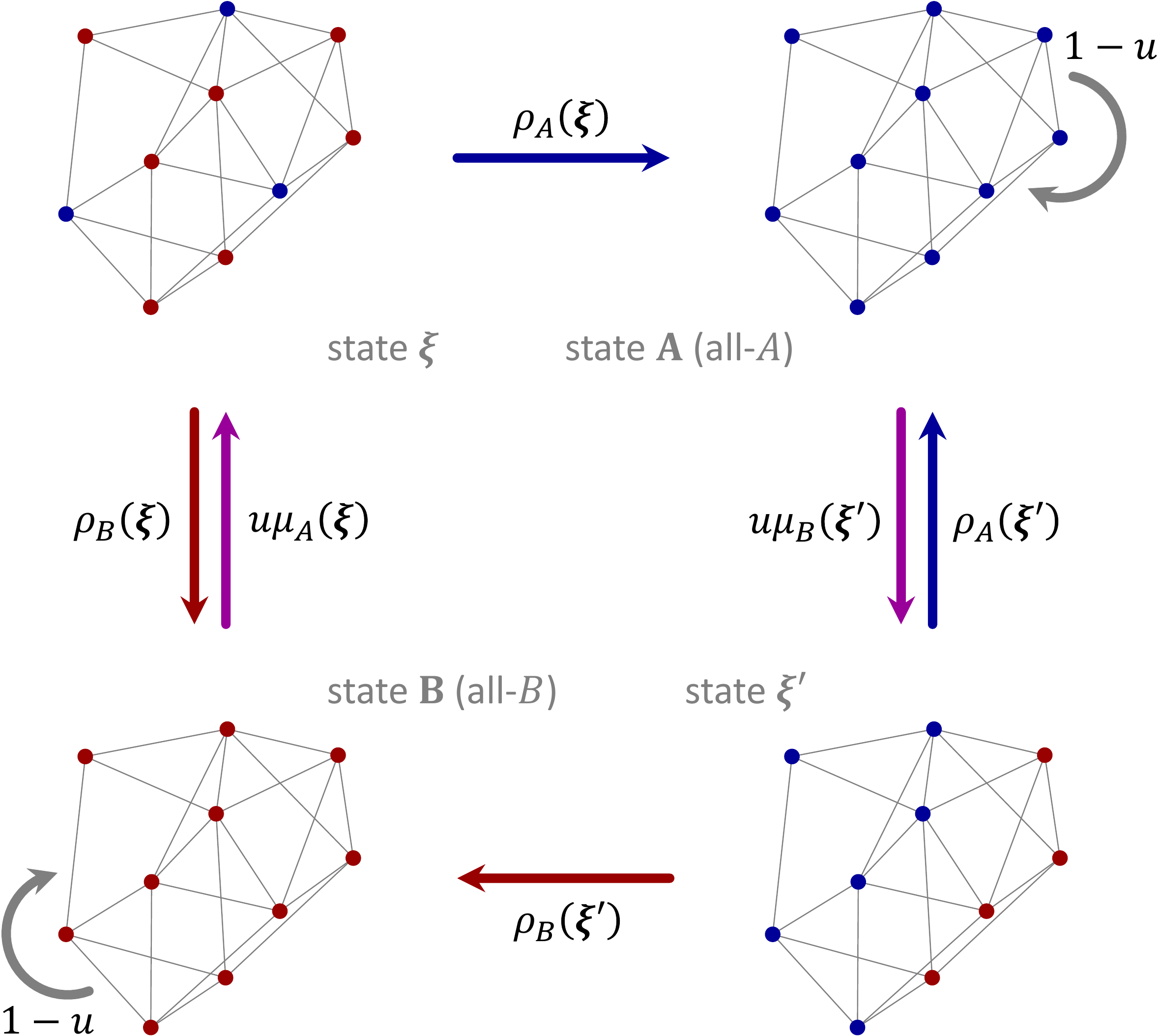}
	\caption{Mutant appearance and fixation in a structured population. In the all-$B$ state, mutants of type $A$ appear based on the mutant-appearance distribution $\mu_{A}$. In the all-$A$ state, mutants of type $B$ appear based on $\mu_{B}$. Once the process transitions into a non-monomorphic state, it will eventually reach one of the two monomorphic states.\label{fig:muAmuB}}
\end{figure}

When $\mu_{A}$ and $\mu_{B}$ are not symmetric, however, it is less natural to compare $\E_{\mu_{A}}\left[\rho_{A}\right]$ to $\E_{\mu_{B}}\left[\rho_{B}\right]$ directly. We therefore derive an alternative measure of success for the asymmetric case. As in Section \ref{sec:sojourn}, we find it helpful to work with an amended Markov chain that possesses a unique stationary distribution. To this end, we suppose that, from the monomorphic state $\vB$, with probability $u$ the next state is chosen from the distribution $\mu_{A}$; otherwise, with probability $1-u$, the state stays $\vB$. Similarly, from the monomorphic state $\vA$, with probability $u$ the next state is chosen from the distribution $\mu_{B}$; otherwise, with probability $1-u$, the state stays $\vB$. The structure of this amended Markov chain is depicted in \fig{muAmuB}. Provided $u>0$, this process has a unique stationary distribution. In analogy to our analysis of mutations into a fixed state in \sect{sojourn}, we denote this stationary distribution by $\pi_{\MSS\left(\mu_{A},\mu_{B}\right)}$.

We now turn to the limit of low mutation. Applying standard results on stationary distributions in this limit  \citep[Theorem 2]{fudenberg:JET:2006}, we have
\begin{align}
 \lim_{u\rightarrow 0} \pi_{\MSS\left(\mu_{A},\mu_{B}\right)}\left(\vx\right) =
 \begin{cases}
 \displaystyle \frac{\E_{\mu_{A}}\left[\rho_{A}\right]}{\E_{\mu_{A}}\left[\rho_{A}\right] +\E_{\mu_{B}}\left[\rho_{B}\right]} & \vx=\vA , \\
 & \\
 \displaystyle \frac{\E_{\mu_{B}}\left[\rho_{B}\right]}{\E_{\mu_{A}}\left[\rho_{A}\right] +\E_{\mu_{B}}\left[\rho_{B}\right]} & \vx=\vB , \\
 & \\
 \displaystyle 0 & \text{otherwise}.
 \end{cases}
\end{align}
Intuitively, for small but positive mutation rates, the process is almost always in state $\vA$ or $\vB$, with the fraction of time spent in $\vA$ converging to $\E_{\mu_{A}}\left[\rho_{A}\right] /\left(\E_{\mu_{A}}\left[\rho_{A}\right] +\E_{\mu_{B}}\left[\rho_{B}\right]\right)$. 

We say that weak selection favors $A$ over $B$ if the fraction of time spent in state $A$ is greater under weak selection than it is in the neutral case:

\begin{definition}
Weak selection favors $A$ over $B$ if
\begin{align}
\label{eq:rhocompare}
\frac{\E_{\mu_{A}}\left[\rho_{A}\right]}{\E_{\mu_{A}}\left[\rho_{A}\right] +\E_{\mu_{B}}\left[\rho_{B}\right]} &> \frac{\E_{\mu_{A}}^{\circ}\left[\rho_{A}^{\circ}\right]}{\E_{\mu_{A}}^{\circ}\left[\rho_{A}^{\circ}\right] +\E_{\mu_{B}}^{\circ}\left[\rho_{B}^{\circ}\right]}
\end{align}
(or, equivalently, $\E_{\mu_{A}}\left[\rho_{A}\right] /\E_{\mu_{B}}\left[\rho_{B}\right] >\E_{\mu_{A}}^{\circ}\left[\rho_{A}^{\circ}\right] /\E_{\mu_{B}}^{\circ}\left[\rho_{B}^{\circ}\right]$) for all sufficiently small $\delta >0$.
\end{definition}

When $\mu_{A}^{\circ}$ and $\mu_{B}^{\circ}$ are symmetric (e.g. temperature or uniform initialization), we have $\E_{\mu_{A}}^{\circ}\left[\rho_{A}^{\circ}\right] =\E_{\mu_{B}}^{\circ}\left[\rho_{B}^{\circ}\right]$, and weak selection favors $A$ relative to $B$ if and only if $\E_{\mu_{A}}\left[\rho_{A}\right] >\E_{\mu_{B}}\left[\rho_{B}\right]$ for all sufficiently small $\delta >0$. For general mutant-appearance distributions, $\mu_{A}$ and $\mu_{B}$, weak selection favors $A$ if 
\begin{align}
    \frac{d}{d\delta}\Bigg\vert_{\delta =0} \frac{\E_{\mu_{A}}\left[\rho_{A}\right]} {\E_{\mu_{A}}\left[\rho_{A}\right] +\E_{\mu_{B}}\left[\rho_{B}\right]}>0.
\end{align}
The left-hand side above has the sign of
\begin{align}
\E_{\mu_{B}}^{\circ}\left[\rho_{B}^{\circ}\right] \frac{d}{d\delta}\Bigg\vert_{\delta =0} \E_{\mu_{A}}\left[\rho_{A}\right]  - 
\E_{\mu_{A}}^{\circ}\left[\rho_{A}^{\circ}\right] \frac{d}{d\delta}\Bigg\vert_{\delta =0} \E_{\mu_{B}}\left[\rho_{B}\right] .
\end{align}
Applying \cor{mutappear}, we obtain the condition
\begin{align}
\E_{\mu_{B}}^{\circ}\left[\rho_{B}^{\circ}\right]\frac{d}{d\delta}\Bigg\vert_{\delta =0}\E_{\mu_{A}}\left[\widehat{\xi}\right] + \E_{\mu_{A}}^{\circ}\left[\rho_{A}^{\circ}\right]\frac{d}{d\delta}\Bigg\vert_{\delta =0}\E_{\mu_{B}}\left[\widehat{\xi}\right] \nonumber \\
 + \sum_{i=1}^{N} \pi_{i} \sum_{j=1}^{N} \sum_{\substack{I\subseteq\left\{1,\dots ,N\right\} \\ 0\leqslant\left| I\right|\leqslant D_{ji}}} c_{I}^{ji}\left(\eta_{\left\{i\right\}\cup I}^{\mu_{AB}}-\eta_{\left\{j\right\}\cup I}^{\mu_{AB}}\right) >0, \label{eq:comparingAB}
\end{align}
where $\eta_{I}^{\mu_{AB}}\coloneqq\E_{\mu_{B}}^{\circ}\left[\rho_{B}^{\circ}\right]\eta_{I}^{\mu_{A}}+\E_{\mu_{A}}^{\circ}\left[\rho_{A}^{\circ}\right]\eta_{I}^{\mu_{B}}$ satisfies the system of equations
\begin{subequations}
\begin{align}
\eta_{I}^{\mu_{AB}} &= \E_{\mu_{A}}^{\circ}\left[\rho_{A}^{\circ}\right]\left(1-\E_{\mu_{B}}^{\circ}\left[\bxi_{I}\right]\right) - \E_{\mu_{B}}^{\circ}\left[\rho_{B}^{\circ}\right]\E_{\mu_{A}}^{\circ}\left[\bxi_{I}\right] \nonumber \\
&\quad + \sum_{\left(R,\alpha\right)} p_{\left(R,\alpha\right)}^{\circ} \eta_{\widetilde{\alpha}\left(I\right)}^{\mu_{AB}} \quad \left(1\leqslant\left| I\right|\leqslant D+1\right) ; \label{eq:muAB} \\
\sum_{i=1}^{N}\pi_{i}\eta_{i}^{\mu_{AB}} &= 0 .
\end{align}
\end{subequations}
\eq{muAB} is due to the facts that $\E_{\mu_{A}}^{\circ}\left[\rho_{A}^{\circ}\right] =\E_{\mu_{A}}^{\circ}\left[\widehat{\xi}\right]$ and $\E_{\mu_{B}}^{\circ}\left[\rho_{B}^{\circ}\right] =1-\E_{\mu_{B}}^{\circ}\left[\widehat{\xi}\right]$.

\begin{example}\label{ex:singlemut}
Consider the case of fixation from a single mutant. Suppose that the initial mutant is located at site $i$ with probability $\mu_{i}$ in both monomorphic states $\vA$ and $\vB$. That is,
\begin{align}
\mu_{A}\left(\mathbf{1}_{i}\right) = \mu_{B}\left(\overline{\mathbf{1}}_{i}\right) = \mu_{i} \quad \left(1\leqslant i\leqslant N\right) .
\end{align}
All states not of the form $\mathbf{1}_{i}$ for $i\in \{1,\ldots,N\}$ have probability zero in $\mu_{A}$, and states not of the form $\overline{\mathbf{1}}_{i}$ have probability zero in $\mu_{B}$. Then we have $\E_{\mu_{A}}^{\circ}\left[\xi_{i}\right] = \mu_{i}$ and $\E_{\mu_{B}}^{\circ}\left[\xi_{i}\right] = 1-\mu_{i}$. For non-singleton $I$, we have $\E_{\mu_{A}}^{\circ}\left[\bxi_{I}\right] = 0$ and $\E_{\mu_{B}}^{\circ}\left[\bxi_{I}\right] = 1-\sum_{i \in I} \mu_{i}$. We also have $ \E_{\mu_{A}}^{\circ}\left[\rho_{A}^{\circ}\right] =  \E_{\mu_{B}}^{\circ}\left[\rho_{B}^{\circ}\right]$ by symmetry, and both $\E_{\mu_{A}}\left[\widehat{\xi}\right]$ and $\E_{\mu_{B}}\left[\widehat{\xi}\right]$ are independent of $\delta$. In this case, weak selection favors $A$ if
\begin{align}
\label{eq:mufixprobcompare}
\sum_{i=1}^{N} \pi_{i} \sum_{j=1}^{N} \sum_{\substack{I\subseteq\left\{1,\dots ,N\right\} \\ 1\leqslant\left| I\right|\leqslant D_{ji}}} c_{I}^{ji}\left(\eta_{\left\{i\right\}\cup I}^{\mu_{AB}}-\eta_{\left\{j\right\}\cup I}^{\mu_{AB}}\right) > 0 ,
\end{align}
where the $\eta_{I}^{\mu_{AB}}$ (upon rescaling by $1/\E_{\mu_{A}}^{\circ}\left[\rho_{A}^{\circ}\right]$) satisfy the simplified recurrence relation
\begin{align}
\label{eq:etamu}
\eta_{I}^{\mu_{AB}} = 
\begin{cases}
\displaystyle \sum_{i \in I} \mu_{i} + \sum_{\left(R,\alpha\right)} p_{\left(R,\alpha\right)}^{\circ} \eta_{\widetilde{\alpha}\left(I\right)}^{\mu_{AB}} & \displaystyle 2\leqslant\left| I\right|\leqslant D+1 , \\
& \\
\displaystyle 0 & \displaystyle \left| I\right| = 1 .
\end{cases}
\end{align}
\end{example}

\section{Application to constant fecundity}
\label{sec:constant}
Constant (or frequency-independent) fecundity refers to the case in which the fecundity (reproductive rate) of an individual depends on only its type and not on the composition of the rest of the population. A common setup has a mutant ($A$) of reproductive rate $r>0$ competing against a resident ($B$) of reproductive rate 1 \citep{lieberman:Nature:2005,broom:PRSA:2008,broom:JSTP:2011,voorhees:PRSA:2013,monk:PRSA:2014,hindersin:PLOSCB:2015,kaveh:RSOS:2015,cuesta2017suppressors,pavlogiannis:CB:2018,moller2019exploring,tkadlec2019population}. The fixation probability of the mutant type is then a function of $r$, which we can write as $\rho_{A}\left(r;\bxi\right)$, where $\bxi\in\mathbb{B}_{\T}^{N}$ is the initial configuration of the mutant and resident.

A common feature of these models is that the fecundity is interpreted in a relative sense, meaning that $r$ quantifies the reproductive rate of $A$ relative to $B$. Consequently, the fixation probabilities of both types are invariant under rescaling the reproductive rates of all individuals. In particular, the fixation probability of a mutant of reproductive rate $r$ competing against a resident of reproductive rate 1 is equal (upon dividing by $r$) to the fixation probability of a mutant of reproductive rate 1 competing against a resident of reproductive rate $r^{-1}$. By interchanging the roles of $A$ and $B$, we see that fixation probabilities satisfy the duality
\begin{align}
\label{eq:rrescale}
\rho_{A}\left(r;\bxi\right) &= 1 - \rho_{A}\left(r^{-1};\overline{\bxi}\right) .
\end{align}

For $r$ close to 1, meaning that the mutation has only a small effect on fecundity, a trait's fixation probability can be analyzed using weak-selection methods such as those considered here \citep{allen2020transient}. To apply these methods, we define the mutant's \emph{selection coefficient} as $s=r-1$. We then obtain an expansion of the fixation probability, $\rho_{A}\left(1+s ;\bxi\right)$, around $s=0$. The selection coefficient $s$ plays a similar role to the selection intensity $\delta$ in the foregoing sections, except that $s$ can also be negative, indicating a disadvantageous mutant.

Taking the $s$-derivative of both sides of \eq{rrescale} at $s=0$, it follows that
\begin{align}
\frac{d}{ds} \Bigg\vert_{s=0} \rho_{A}\left(1+s ;\bxi\right) &= \frac{d}{ds} \Bigg\vert_{s=0} \rho_{A}\left(1+s ;\overline{\bxi}\right) \nonumber \\
&= \frac{1}{2}\frac{d}{ds} \Bigg\vert_{s=0} \left( \rho_{A}\left(1+s ;\bxi\right) + \rho_{A}\left(1+s ;\overline{\bxi}\right) \right) .
\end{align}
Applying \tthm{main} (with $s$ in place of $\delta$), we have the following weak-selection expansion for $\rho_{A}$:
\begin{align}
\label{eq:rho_const}
\rho_{A}\left(1+s ;\bxi\right) &= \widehat{\xi} + \frac{s}{2} \sum_{i=1}^{N} \pi_{i} \sum_{j=1}^{N} \sum_{\substack{I\subseteq\left\{1,\dots ,N\right\} \\ 0\leqslant\left| I\right|\leqslant D_{ji}}} c_{I}^{ji}\left(\widetilde{\eta}_{\left\{i\right\}\cup I}^{\bxi}-\widetilde{\eta}_{\left\{j\right\}\cup I}^{\bxi}\right) + O\left(s^{2}\right) ,
\end{align}
where $\widetilde{\eta}_{I}^{\bxi} \coloneqq  \eta_{I}^{\bxi} + \eta_{I}^{\overline{\bxi}}$. These $\widetilde{\eta}_{I}^{\bxi}$ are the unique solution to the recurrence relation
\begin{align}
\widetilde{\eta}_{I}^{\bxi} = 
\begin{cases}
\displaystyle 1-\left(\bxi_{I} + \overline{\bxi}_{I}\right) + \sum_{\left(R,\alpha\right)} p_{\left(R,\alpha\right)}^{\circ} \widetilde{\eta}_{\widetilde{\alpha}\left(I\right)}^{\bxi} & \displaystyle 2\leqslant\left| I\right|\leqslant D+1 , \\
& \\
\displaystyle 0 & \displaystyle \left| I\right| =1 .
\end{cases}
\end{align}
Note that $\bxi_{I}+\overline{\bxi}_{I}$ is equal to 1 if all individuals in $I$ have the same type in $\bxi$ (i.e. $\xi_{i}=\xi_{j}$ for all $i,j \in I$) and 0 otherwise. In particular, $\xi_{i}+\overline{\xi}_{i}=1$ for all $i=1,\dots ,N$, which is why $\widetilde{\eta}_{I}^{\bxi}=0$ for $|I|=1$.

If the initial state $\bxi$ is drawn from a mutant-appearance distribution, $\mu_{A}$, we then have
\begin{align}
\frac{d}{ds}\Bigg\vert_{s =0}\E_{\mu_{A}}\left[\rho_{A}\right] &= \frac{d}{ds}\Bigg\vert_{s =0}\E_{\mu_{A}}\left[\widehat{\xi}\right] 
+ \frac{1}{2} \sum_{i=1}^{N} \pi_{i} \sum_{j=1}^{N} \sum_{\substack{I\subseteq\left\{1,\dots ,N\right\} \\ 0\leqslant\left| I\right|\leqslant D_{ji}}} c_{I}^{ji}\left(\widetilde{\eta}_{\left\{i\right\}\cup I}^{\mu_{A}}-\widetilde{\eta}_{\left\{j\right\}\cup I}^{\mu_{A}}\right) ,
\end{align}
where
\begin{align}
\widetilde{\eta}^{\mu_{A}}_I & = 
\begin{cases}
\displaystyle 1-\E_{\mu_A}^{\circ} \left[\bxi_{I}+\overline{\bm{\xi}}_{I}\right] +\sum_{\left(R,\alpha\right)} p_{\left(R,\alpha\right)}^{\circ} \widetilde{\eta}_{\widetilde{\alpha}\left(I\right)}^{\mu_A} & \displaystyle 2\leqslant\left| I\right|\leqslant D+1 , \\
& \\
\displaystyle 0 & \displaystyle \left| I\right| =1.
\end{cases}
\end{align}
In particular, for the mutant-appearance distribution in \ex{singlemut}, in which the initial state has a single mutant whose location $i$ is chosen with probability $\mu_i$ (independent of $r$), we have
\begin{align}
\E_{\mu_{A}} \left[ \rho_{A} \right] &= \sum_{i=1}^N \pi_{i}\mu_{i} + 
\frac{s}{2}  \sum_{i=1}^{N} \pi_{i} \sum_{j=1}^{N} \sum_{\substack{I\subseteq\left\{1,\dots ,N\right\} \\ 0\leqslant\left| I\right|\leqslant D_{ji}}} c_{I}^{ji}\left(\widetilde{\eta}_{\left\{i\right\}\cup I}^{\mu_{A}}-\widetilde{\eta}_{\left\{j\right\}\cup I}^{\mu_{A}}\right) + O\left(s^{2}\right) , \label{eq:cf_muA}
\end{align}
where, owing to the fact that $\E_{\mu_A}^{\circ} \left[\bxi_{I}+\overline{\bm{\xi}}_{I}\right] =\E_{\mu_A}^{\circ} \left[\overline{\bm{\xi}}_{I}\right] =1-\sum_{i\in I}\mu_{i}$ when $\left| I\right| >1$,
\begin{align}
\label{eq:etamu2}
\widetilde{\eta}^{\mu_{A}}_I & = 
\begin{cases}
\displaystyle \sum_{i \in I} \mu_{i} + \sum_{\left(R,\alpha\right)} p_{\left(R,\alpha\right)}^{\circ} \widetilde{\eta}_{\widetilde{\alpha}\left(I\right)}^{\mu_A} & \displaystyle 2\leqslant\left| I\right|\leqslant D+1 , \\
& \\
\displaystyle 0 & \displaystyle \left| I\right| =1 .
\end{cases}
\end{align}
\eq{etamu2} is identical to the recurrence relation for $\eta^{\mu_{A}}_I$ in \eq{etamu}.

\subsection{Constant fecundity on graphs}
\label{sec:const_fecund_graphs}
Let us now suppose that the population structure is represented by an undirected, unweighted graph. Each vertex is occupied by one individual. As above, we consider a mutant type, $A$, of fecundity $r=1+s$, competing against a resident type, $B$, of fecundity 1. A robust research program aims to elucidate the effects of graph structure on the mutant's fixation probability \citep{lieberman:Nature:2005,broom:PRSA:2008,broom:JSTP:2011,voorhees:PRSA:2013,monk:PRSA:2014,hindersin:PLOSCB:2015,kaveh:RSOS:2015,cuesta2017suppressors,pavlogiannis:CB:2018,moller2019exploring,tkadlec2019population}.

The edge weight between vertices $i$ and $j$ is denoted $w_{ij}$ ($=w_{ji}$). We define the \emph{weighted degree} of vertex $i$ as $w_{i}\coloneqq\sum_{j=1}^{N}w_{ij}$. One can define a natural random walk on this graph, moving from vertex $i$ to vertex $j$ with probability $p_{ij}^{\left(1\right)}\coloneqq w_{ij}/w_{i}$ (where the superscript indicates that this probability is for one step in the random walk). More generally, the probability of moving from $i$ to $j$ in $n$ steps of this random walk is $p_{ij}^{\left(n\right)}\coloneqq\left(p^{\left(1\right)}\right)_{ij}^{n}$, i.e. entry $\left(i,j\right)$ of the $n$th power of the transition matrix $p^{\left(1\right)}$.

\subsubsection{death-Birth updating}
\label{sec:const_fecund_dB}
Under the \emph{death-Birth} process \citep{ohtsuki:Nature:2006,hindersin:PLOSCB:2015,allen2020transient,tkadlec2020limits}, an individual is first chosen, uniformly-at-random from the population, to be replaced (``death''). A neighbor is then chosen, with probability proportional to the product of edge weight and fecundity, to produce an offspring that fills the vacancy (``Birth''). The term ``Birth'' is capitalized here to emphasize the fact that selection influences this step \citep{hindersin:PLOSCB:2015}.

For this process, the probability that the offspring of $i$ replaces the occupant of $j$ is
\begin{align}
e_{ij}\left(1+s ;\vx\right) &= \frac{1}{N} \frac{w_{ij}\left(\left(1+ s\right) x_{i}+1-x_{i}\right)}{\sum_{k=1}^{N}w_{kj}\left(\left(1+ s\right) x_{k} +1-x_{k}\right)} = \frac{p_{ji}^{\left(1\right)}}{N} \frac{1+s x_{i}}{1+s \sum_{k=1}^{N}p_{jk}^{\left(1\right)} x_{k}} .
\end{align}
Differentiating this expression with respect to $s$ at $s=0$ gives
\begin{align}
\frac{d}{ds}\Bigg\vert_{s =0}e_{ij}\left(1+s ;\vx\right) &= \sum_{k=1}^{N} c_{k}^{ij} x_{k} ,
\end{align}
where
\begin{align}
c_{k}^{ij} &= 
\begin{cases}
\displaystyle \frac{p_{ji}^{\left(1\right)}}{N} \left( 1 - p_{ji}^{\left(1\right)} \right) & \displaystyle k=i , \\
& \\
\displaystyle -\frac{p_{ji}^{\left(1\right)}}{N} p_{jk}^{\left(1\right)} & \textrm{otherwise} .
\end{cases}
\end{align}
This process therefore has degree $D=1$.

Under neutral drift, $e_{ij}^{\circ}=p_{ji}^{\left(1\right)}/N$, and solving \eq{pisystem} yields $\pi_{i}=w_{i}/\sum_{j=1}^{N}w_{j}$ as the reproductive value of vertex $i$ under death-Birth updating \citep[see also][]{maciejewski:JTB:2014a,allen:Nature:2017,allen:JMB:2019}. Using \eq{rho_const} and the reversibility property $\pi_{i}p_{ij}^{\left(1\right)}=\pi_{j}p_{ji}^{\left(1\right)}$, a series of simplifications gives the following result:

\begin{proposition}
For the death-Birth process on a weighted graph, with constant mutant fecundity $r=1+s$, the fixation probability from arbitrary starting configuration $\bxi$ can be expanded under weak selection as
\begin{align}
\rho_{A}\left(1+s ;\bxi\right) = \widehat{\xi} + \frac{s}{2N} \sum_{i,j=1}^{N} \pi_{i} p_{ij}^{\left(2\right)} \widetilde{\eta}_{ij}^{\bxi} + O\left(s^{2}\right) ,
\end{align}
where the terms $\widetilde{\eta}_{ij}^{\bxi}$ arise as the unique solution to the recurrence relation
\begin{align}
\widetilde{\eta}_{ij}^{\bxi} = 
\begin{cases}
\displaystyle \frac{N}{2}\left( {\xi}_{i}+{\xi}_{j} - 2{\xi}_{i}{\xi}_{j} \right) + \frac{1}{2}
\sum_{k=1}^{N} \left( p_{ik}^{\left(1\right)}\widetilde{\eta}_{kj}^{\bxi} + p_{jk}^{\left(1\right)}\widetilde{\eta}_{ik}^{\bxi} \right)
& \displaystyle i \neq j , \\
& \\
\displaystyle 0 & \displaystyle i=j .
\end{cases}
\end{align}
\end{proposition}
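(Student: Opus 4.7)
The plan is to specialize Eq.~(\ref{eq:rho_const}) to the death-Birth process, using the Fourier coefficients $c_{k}^{ji}$ and the reproductive values $\pi_{i}=w_{i}/\sum_{\ell}w_{\ell}$ derived in \sect{const_fecund_dB}. Because the process has degree $D=1$ and $c_{\varnothing}^{ij}=0$, \eq{rho_const} collapses to
\begin{align*}
\rho_{A}\left(1+s;\bxi\right)=\widehat{\xi}+\frac{s}{2}\sum_{i,j,k=1}^{N}\pi_{i}\,c_{k}^{ji}\left(\widetilde{\eta}_{\left\{i,k\right\}}^{\bxi}-\widetilde{\eta}_{\left\{j,k\right\}}^{\bxi}\right)+O\left(s^{2}\right),
\end{align*}
so the task reduces to (i) showing the inner triple sum equals $\tfrac{1}{N}\sum_{i,j}\pi_{i}p_{ij}^{\left(2\right)}\widetilde{\eta}_{ij}^{\bxi}$, and (ii) turning the general pair recurrence into the one stated.

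For (i), I would write $c_{k}^{ji}$ in the unified form $\tfrac{p_{ij}^{\left(1\right)}}{N}\delta_{kj}-\tfrac{p_{ij}^{\left(1\right)}}{N}p_{ik}^{\left(1\right)}$ and split the triple sum into its two contributions. The $\delta_{kj}$ piece collapses the $k$-sum and uses $\widetilde{\eta}_{\left\{j,j\right\}}^{\bxi}=\widetilde{\eta}_{\left\{j\right\}}^{\bxi}=0$ to yield $\tfrac{1}{N}\sum_{i,j}\pi_{i}p_{ij}^{\left(1\right)}\widetilde{\eta}_{ij}^{\bxi}$. The $p_{ik}^{\left(1\right)}$ piece splits further along the two $\widetilde{\eta}$-terms: one part factors using $\sum_{j}p_{ij}^{\left(1\right)}=1$ to produce $-\tfrac{1}{N}\sum_{i,k}\pi_{i}p_{ik}^{\left(1\right)}\widetilde{\eta}_{ik}^{\bxi}$, which cancels the first piece exactly; the other part, after invoking the reversibility $\pi_{i}p_{ij}^{\left(1\right)}=\pi_{j}p_{ji}^{\left(1\right)}$ and the Chapman--Kolmogorov identity $\sum_{i}p_{ji}^{\left(1\right)}p_{ik}^{\left(1\right)}=p_{jk}^{\left(2\right)}$, becomes $\tfrac{1}{N}\sum_{j,k}\pi_{j}p_{jk}^{\left(2\right)}\widetilde{\eta}_{jk}^{\bxi}$, as required.

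For (ii), I would apply \eq{generalRecurrence}, summed over $\bxi$ and $\overline{\bxi}$, to $I=\left\{i,j\right\}$ with $i\neq j$. Under death-Birth neutral drift a replacement picks $R=\left\{k\right\}$ uniformly with probability $1/N$ and then a parent $\ell$ with probability $p_{k\ell}^{\left(1\right)}$, so $\widetilde{\alpha}\left(\left\{i,j\right\}\right)$ equals $\left\{i,j\right\}$ when $k\notin\left\{i,j\right\}$ (contributing $\tfrac{N-2}{N}\widetilde{\eta}_{ij}^{\bxi}$), equals $\left\{\ell,j\right\}$ when $k=i$, and equals $\left\{i,\ell\right\}$ when $k=j$. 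Together with $1-\bxi_{\left\{i,j\right\}}-\overline{\bxi}_{\left\{i,j\right\}}=\xi_{i}+\xi_{j}-2\xi_{i}\xi_{j}$, moving the self-term to the left and multiplying through by $N/2$ produces the stated recurrence. The boundary $\widetilde{\eta}_{ii}^{\bxi}=0$ comes from specializing Eq.~(\ref{eq:generalRecurrence}) to $\left| I\right|=1$: the source terms $\widehat{\xi}-\xi_{i}$ and $\left(1-\widehat{\xi}\right)-\left(1-\xi_{i}\right)$ cancel in the $\bxi/\overline{\bxi}$ sum, leaving a homogeneous relation whose only solution compatible with $\sum_{i}\pi_{i}\widetilde{\eta}_{\left\{i\right\}}^{\bxi}=0$ is zero. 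Uniqueness of $\widetilde{\eta}^{\bxi}$ is then inherited from \tthm{main} applied separately to $\bxi$ and to $\overline{\bxi}$.

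The main obstacle is entirely bookkeeping. The framework convention that $e_{ij}$ is the probability that $i$ replaces $j$ places the random-walk probability as $p_{ji}^{\left(1\right)}$ (the parent of the vacated site $j$ is drawn from $j$'s neighbors), so the $i\leftrightarrow j$ swap between these conventions must be tracked with care before reversibility is invoked in the correct direction. Once the correct asymmetry is threaded through the computation, the reduction to $p_{ij}^{\left(2\right)}$ and the simplification of the pair recurrence are both mechanical.
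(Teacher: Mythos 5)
Your proposal is correct and follows essentially the same route the paper indicates: specialize \eq{rho_const} to the death-Birth coefficients $c_{k}^{ji}$, collapse the triple sum using $\sum_{j}p_{ij}^{\left(1\right)}=1$, reversibility $\pi_{i}p_{ij}^{\left(1\right)}=\pi_{j}p_{ji}^{\left(1\right)}$, and Chapman--Kolmogorov to get the $p_{ij}^{\left(2\right)}$ form, and then specialize the general constant-fecundity recurrence for $\widetilde{\eta}_{I}^{\bxi}$ to pairs under the neutral death-Birth replacement rule. The bookkeeping (including the $e_{ij}$ versus $p_{ji}^{\left(1\right)}$ convention and the boundary $\widetilde{\eta}_{ii}^{\bxi}=0$) matches the paper's intended ``series of simplifications.''
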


The factor of $2$ above is related to the fact that we focus on only those replacement events that influence $i$ and $j$.

For a single mutant initialized randomly as in \ex{singlemut}, 
 \eq{cf_muA} becomes
\begin{align}
\label{eq:constant_dB_mu}
\E_{\mu_{A}} \left[ \rho_{A} \right] &= \sum_{i=1}^{N} \pi_{i} \mu_{i} +  \frac{s}{2N} \sum_{i,j=1}^{N} \pi_{i} p_{ij}^{\left(2\right)} \widetilde{\eta}_{ij}^{\mu_{A}} + O\left(s^{2}\right) ,
\end{align}
where
\begin{align}
\label{eq:constant_dB_mu_eta}
\widetilde{\eta}_{ij}^{\mu_{A}} =
\begin{cases}
\displaystyle N\left(\frac{\mu_{i}+\mu_{j}}{2}\right) + \frac{1}{2}\sum_{k=1}^{N}\left(p_{ik}^{\left(1\right)}\widetilde{\eta}_{kj}^{\mu_{A}} + p_{jk}^{\left(1\right)}\widetilde{\eta}_{ik}^{\mu_{A}} \right) & \displaystyle i \neq j , \\
 & \\
\displaystyle 0 &\displaystyle i=j .
\end{cases}
\end{align}
\eqsc{constant_dB_mu}{constant_dB_mu_eta} generalize a result of \cite{allen2020transient}, which pertained to the case of uniform initialization, i.e. $\mu_i=1/N$ for all $i$.

\subsubsection{Birth-death updating}
In the \emph{Birth-death} process \citep{lieberman:Nature:2005,hindersin:PLOSCB:2015}, also known as the \emph{invasion process} \citep{antal:PRL:2006}, an individual $i$ is selected to reproduce with probability proportional to fecundity; the offspring of $i$ replaces $j$ with probability $p_{ij}^{\left(1\right)}$.

Letting $\left|\vx\right|\coloneqq\sum_{i=1}^{N}x_{i}$ denote the abundance of type $A$ in state $\vx$, the probability that $i$ replaces $j$ in this state is
\begin{align}
e_{ij}\left(1+s ;\vx\right) &= \frac{\left(1+ s\right) x_{i} + 1-x_{i}}{\sum_{k=1}^{N}\left(\left(1+ s\right) x_{k}+1-x_{k}\right)} p_{ij}^{\left(1\right)} = \frac{1+ sx_{i}}{N+ s\left|\vx\right|} p_{ij}^{\left(1\right)} .
\end{align}
Differentiating this expression with respect to $s$ at $s=0$ gives
\begin{align}
\frac{d}{ds}\Bigg\vert_{s =0}e_{ij}\left(1+s ;\vx\right) &=  \sum_{k=1}^{N} c_{k}^{ij} x_{k} ,
\end{align}
where
\begin{align}
c_{k}^{ij} &= 
\begin{cases}
\displaystyle \frac{N-1}{N^{2}}p_{ij}^{\left(1\right)} & \displaystyle k=i , \\
& \\
\displaystyle -\frac{1}{N^{2}}p_{ij}^{\left(1\right)} & \textrm{otherwise} .
\end{cases}
\end{align}
Under neutral drift, $e_{ij}^{\circ}=p_{ij}^{\left(1\right)}/N$, and \eq{pisystem} yields a reproductive value of  $\pi_{i}=w_{i}^{-1}/\sum_{j=1}^{N}w_{j}^{-1}$ for Birth-death updating \citep[see also][]{maciejewski:JTB:2014a,allen:Nature:2017,allen:JMB:2019}. A series of simplifications based on \eq{rho_const} and the relation $\pi_{i}p_{ji}^{\left(1\right)}=\pi_{j}p_{ij}^{\left(1\right)}$ gives the following result:

\begin{proposition}
For the Birth-death process on a weighted graph, with constant mutant fecundity $r=1+s$, the fixation probability from arbitrary starting configuration $\bxi$ can be expanded under weak selection as
\begin{align}
 \rho_{A}(1+s) = \widehat{\xi}  +  \frac{s}{2N} \sum_{i,j=1}^{N} \pi_{i} p_{ji}^{\left(1\right)} \widetilde{\eta}_{ij}^{\bxi} + O\left(s^{2}\right),
\end{align}
where the terms $\widetilde{\eta}_{ij}^{\bxi}$ arise as the unique solution to
\begin{align}
\widetilde{\eta}_{ij}^{\bxi} = 
\begin{cases}
\displaystyle \frac{N\left( {\xi}_{i}+{\xi}_{j} - 2{\xi}_{i}{\xi}_{j} \right) + \sum_{k=1}^{N} \left( p_{ki}^{\left(1\right)} \widetilde{\eta}_{kj}^{\bxi} +  p_{kj}^{\left(1\right)} \widetilde{\eta}_{ik}^{\bxi}\right)}{\sum_{k=1}^{N}\left(p_{ki}^{\left(1\right)}+p_{kj}^{\left(1\right)}\right)} & \displaystyle i \neq j , \\
& \\
\displaystyle 0 & \displaystyle i=j .
\end{cases}
\end{align}
\end{proposition}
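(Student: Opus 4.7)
The plan is to specialize the constant-fecundity expansion \eq{rho_const} to the Birth-death process on a weighted graph, paralleling the death-Birth argument sketched just above in the excerpt. Because the preceding calculation shows the process has degree $D=1$ with $c_{\varnothing}^{ji}=0$, only the $I=\{k\}$ terms in \eq{rho_const} contribute, and swapping $i\leftrightarrow j$ in the formula for $c_k^{ij}$ displayed in the excerpt yields $c_k^{ji}=\frac{p_{ji}^{(1)}}{N^{2}}\left(N\delta_{kj}-1\right)$. Substituting this into \eq{rho_const} gives
\begin{align*}
\frac{d}{ds}\Bigg|_{s=0}\rho_{A}\left(1+s;\bxi\right) = \frac{1}{2N^{2}}\sum_{i,j=1}^{N} \pi_{i}\,p_{ji}^{(1)} \sum_{k=1}^{N}\left(N\delta_{kj}-1\right)\bigl(\widetilde{\eta}_{\left\{i,k\right\}}^{\bxi}-\widetilde{\eta}_{\left\{j,k\right\}}^{\bxi}\bigr) .
\end{align*}

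The next step is to split the inner $k$-sum into three cases. Using $\widetilde{\eta}_{\left\{m\right\}}^{\bxi}=0$, the $k=j$ term contributes $(N-1)\widetilde{\eta}_{ij}^{\bxi}$ and the $k=i$ term (for $i\neq j$) contributes $+\widetilde{\eta}_{ij}^{\bxi}$, combining to a diagonal piece of $N\widetilde{\eta}_{ij}^{\bxi}$. Together with the prefactor $1/(2N^{2})$, this produces the claimed leading term $\frac{s}{2N}\sum_{i,j} \pi_{i} p_{ji}^{(1)}\widetilde{\eta}_{ij}^{\bxi}$. The remaining ``cross terms'' assemble into
\begin{align*}
\sum_{i,j=1}^{N}\pi_{i}\,p_{ji}^{(1)}\sum_{\substack{k=1\\ k\notin\left\{i,j\right\}}}^{N}\bigl(\widetilde{\eta}_{\left\{j,k\right\}}^{\bxi}-\widetilde{\eta}_{\left\{i,k\right\}}^{\bxi}\bigr) ,
\end{align*}
which must be shown to vanish.

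This cancellation is the main obstacle and the one genuinely non-routine step of the argument. I would invoke the reversibility relation $\pi_{i} p_{ji}^{(1)}=\pi_{j} p_{ij}^{(1)}$ noted in the excerpt, so that $\Pi_{ij}\coloneqq\pi_{i}p_{ji}^{(1)}$ is symmetric in $(i,j)$; then, since the outer indices are restricted to pairwise-distinct triples $(i,j,k)$ (the $i=j$ contribution is trivially zero because $\widetilde{\eta}_{\left\{j,k\right\}}^{\bxi}-\widetilde{\eta}_{\left\{i,k\right\}}^{\bxi}$ vanishes there), relabeling $i\leftrightarrow j$ in the second piece of the cross-term sum matches it identically to the first, forcing the difference to vanish.

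Finally, to derive the recurrence for $\widetilde{\eta}_{ij}^{\bxi}$ I would specialize the general recurrence for $\widetilde{\eta}_{I}^{\bxi}$ displayed just after \eq{rho_const} to $I=\left\{i,j\right\}$ with $i\neq j$. Under Birth-death neutral drift, each replacement event has $R=\left\{\ell\right\}$, $\alpha(\ell)=k$, and probability $\frac{1}{N}p_{k\ell}^{(1)}$, and the extended map $\widetilde{\alpha}$ either fixes $\left\{i,j\right\}$ (when $\ell\notin\left\{i,j\right\}$), sends it to $\left\{k,j\right\}$ (when $\ell=i$), or to $\left\{i,k\right\}$ (when $\ell=j$). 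Using $\sum_{\ell}p_{k\ell}^{(1)}=1$ to rewrite $\sum_{k}\sum_{\ell\notin\left\{i,j\right\}}p_{k\ell}^{(1)}=N-\sum_{k}(p_{ki}^{(1)}+p_{kj}^{(1)})$, and $1-\left(\bxi_{I}+\overline{\bxi}_{I}\right)=\xi_{i}+\xi_{j}-2\xi_{i}\xi_{j}$ for $\left|I\right|=2$, one collects all $\widetilde{\eta}_{ij}^{\bxi}$ contributions on the left-hand side and divides through by $\sum_{k}(p_{ki}^{(1)}+p_{kj}^{(1)})/N$, which produces exactly the stated recurrence; the $i=j$ case follows from $\widetilde{\eta}_{\left\{i\right\}}^{\bxi}=0$.
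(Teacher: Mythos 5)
Your proposal is correct and follows exactly the route the paper intends: specialize \eq{rho_const} to the degree-one Birth-death coefficients $c_{k}^{ji}$, use $\widetilde{\eta}_{\left\{i\right\}}^{\bxi}=0$ and the symmetry of $\pi_{i}p_{ji}^{\left(1\right)}$ to kill the cross terms, and specialize the general recurrence to $\left| I\right| =2$ under the neutral replacement probabilities $\tfrac{1}{N}p_{k\ell}^{\left(1\right)}$. The paper merely asserts ``a series of simplifications,'' and your calculation supplies those steps correctly, including the key cancellation via reversibility and the rearrangement yielding the denominator $\sum_{k}\left(p_{ki}^{\left(1\right)}+p_{kj}^{\left(1\right)}\right)$.
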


The factor $\sum_{k=1}^{N}\left(p_{ki}^{\left(1\right)}+p_{kj}^{\left(1\right)}\right)$ is analogous to the factor of $2$ in the corresponding expression for death-Birth updating (due to considering the effects of a replacement rule on only $i$ and $j$). However, death is not necessarily uniform under Birth-death updating, which results in a slightly more complicated scaling factor.

For the mutant-appearance distribution of \ex{singlemut}, \eq{cf_muA} simplifies to
\begin{align}
\E_{\mu_{A}} \left[ \rho_{A} \right] &= \sum_{i=1}^{N} \pi_{i} \mu_{i} +  \frac{s}{2N} \sum_{i,j=1}^{N} \pi_{i} p_{ji}^{\left(1\right)} \widetilde{\eta}_{ij}^{\mu_{A}} + O\left(s^{2}\right) ,
\end{align}
where 
\begin{align}
\widetilde{\eta}_{ij}^{\mu_{A}} = 
\begin{cases}
\displaystyle \frac{N\left( \mu_i+\mu_j \right) + \sum_{k=1}^{N} \left( p_{ki}^{\left(1\right)} \widetilde{\eta}_{kj}^{\mu_{A}} +  p_{kj}^{\left(1\right)} \widetilde{\eta}_{ik}^{\mu_{A}}\right)}{\sum_{k=1}^{N}\left(p_{ki}^{\left(1\right)}+p_{kj}^{\left(1\right)}\right)} & \displaystyle i \neq j , \\
& \\
\displaystyle 0 & \displaystyle i=j .
\end{cases}
\end{align}

\section{Application to evolutionary game theory}\label{sec:examples}
We now move from constant fecundity to evolutionary games (frequency-dependent fecundity) in structured populations \citep{blume:GEB:1993,nowak:Nature:1992,ohtsuki:Nature:2006,szabo:PR:2007,nowak:PTRSB:2009}. In this setting, individuals interact with one another and receive a net payoff based on the types (strategies) of those with whom they interact. In state $\vx\in\mathbb{B}^{N}$, we let $u_{i}\left(\vx\right)$ denote the payoff (or utility) of player $i$. This payoff is converted into relative fecundity, $F_{i}$, by letting $F_{i}\left(\vx\right) =\exp\left\{\delta u_{i}\left(\vx\right)\right\}$, where $\delta>0$ is the selection intensity parameter. (An alternative convention, $F_{i}\left(\vx\right) =1+\delta u_{i}\left(\vx\right)$, is equivalent under weak selection, since both satisfy $F_{i}\left(\vx\right)=1+\delta u_i\left(\vx\right) + O\left(\delta^{2}\right)$.) The replacement rule then depends directly on the fecundity vector, $\vF\in\left(0,\infty\right)^{N}$, i.e. $e_{ij}\left(\vx\right) =e_{ij}\left(\vF\left(\vx\right)\right)$. Furthermore, under weak selection, there is no loss of generality in assuming that the state-to-fecundity mapping, $\vx\mapsto\vF\left(\vx\right)$, is deterministic \citep[see][]{mcavoy:NHB:2020}. Therefore, for every $i,j=1,\dots ,N$, we can write
\begin{align}
\frac{d}{d\delta}\Bigg\vert_{\delta =0}e_{ij}\left(\vx\right) &= \sum_{k=1}^{N} \left(\frac{\partial}{\partial F_{k}}\Bigg\vert_{\vF =\mathbf{1}} e_{ij}\left(\vF\right)\right) u_{k}\left(\vx\right) .
\end{align}
Since the fecundity derivative $\frac{\partial}{\partial F_{k}}\Big\vert_{\vF =\mathbf{1}} e_{ij}\left(\vF\right)$ does not depend on $\vx$, the degree of the process under weak selection is controlled by the utility functions $\left\{u_{i}\left(\vx\right)\right\}_{i=1}^{N}$. Let $m_{k}^{ij}\coloneqq\frac{\partial}{\partial F_{k}}\Big\vert_{\vF =\mathbf{1}} e_{ij}\left(\vF\right)$ be the marginal effect of the fecundity of $k$ on $i$ replacing $j$ \citep{mcavoy:NHB:2020}, and suppose that individual $k$'s payoff is $u_{k}\left(\vx\right) =\sum_{I\subseteq\left\{1,\dots ,N\right\}}p_{I}^{k}\vx_{I}$. We then have
\begin{align}
\frac{d}{d\delta}\Bigg\vert_{\delta =0}e_{ij}\left(\vx\right) &= \sum_{k=1}^{N} m_{k}^{ij} u_{k}\left(\vx\right) = \sum_{I\subseteq\left\{1,\dots ,N\right\}} \left( \sum_{k=1}^{N} m_{k}^{ij} p_{I}^{k} \right) \vx_{I} ,
\end{align}
which, by the uniqueness of the representation of \eq{eij_derivative}, gives $c_{I}^{ij}=\sum_{k=1}^{N}m_{k}^{ij}p_{I}^{k}$. Therefore, generically, the degree of the process coincides with the maximal degree of the payoff functions when each payoff function is viewed as a multi-linear polynomial in $x_{1},\dots ,x_{N}$ \citep[see also][]{ohtsuki:PTRSB:2014,mcavoy:JMB:2016}.

\subsection{Additive games} \label{sec:additive}
Additive games are a special class of games for which the conditions to be favored by selection can be written in a simplified form. An evolutionary game is additive if its payoff function, $u_{i}\left(\vx\right)$, is of degree one (linear) in $\vx$ for every $i=1,\dots ,N$. In this case we can write
\begin{align}
\frac{d}{d\delta}\Bigg\vert_{\delta =0}e_{ij}\left(\vx\right) &= c_{0}^{ij} + \sum_{k=1}^{N} c_{k}^{ij}x_{k}
\end{align}
for some collection of constants $c_{k}^{ij}$ with $i,j,k \in \{1,\ldots,N\}$. If we further assume that the all-$B$ state has the same replacement probabilities as the neutral process---that is, $p_{\left(R,\alpha\right)}\left(\vB\right) =p_{\left(R,\alpha\right)}^{\circ}$ for all $\left(R,\alpha\right)$ and all $\delta >0$---it then follows that $c^{ij}_0=0$ for all $i,j$. In this case, \tthm{main} gives
\begin{align}
\frac{d}{d\delta}\Bigg\vert_{\delta =0}\E_{\mu_{A}}\left[\rho_{A}\right] &= \frac{d}{d\delta}\Bigg\vert_{\delta =0}\E_{\mu_{A}}\left[\widehat{\xi}\right] + \sum_{i,j,k=1}^{N} \pi_{i} c_{k}^{ji}\left(\eta_{ik}^{\mu_{A}}-\eta_{jk}^{\mu_{A}}\right) ,
\end{align}
where the terms $\eta^{\mu_{A}}_{ij}$ arise as the unique solution to the equations
\begin{subequations}
\begin{align}
\eta_{ij}^{\mu_{A}} &= \E_{\mu_{A}}^{\circ}\left[\widehat{\xi}-\xi_{i}\xi_{j}\right] +\sum_{\left(R,\alpha\right)} p_{\left(R,\alpha\right)}^{\circ} \eta_{\widetilde{\alpha}\left(i\right)\widetilde{\alpha}\left(j\right)}^{\mu_{A}} \quad \left(1\leqslant i,j\leqslant N\right) ; \\
\sum_{i=1}^{N}\pi_{i}\eta_{ii}^{\mu_{A}} &= 0 .
\end{align}
\end{subequations}

Let us now consider the mutant-appearance distributions of \ex{singlemut}; i.e.~a single mutant appears at site $i$ with probability $\mu_{i}$. The weak-selection fixation probability of type $A$ can then be calculated as
\begin{align}
    \E_{\mu_{A}}\left[\rho_{A}\right] = \sum_{i=1}^N \pi_i \mu_i + \delta \sum_{i,j,k=1}^{N} \pi_{i} c_{k}^{ji}\left(\eta_{ik}^{\mu_{A}}-\eta_{jk}^{\mu_{A}}\right)
    + O\left(\delta^{2}\right) ,
\end{align}
with $\eta^{\mu_{A}}_{ij}$ as above. Furthermore, the condition for weak selection to favor $A$, \eq{mufixprobcompare}, becomes
\begin{align}
\sum_{i,j,k=1}^{N} \pi_{i} c_{k}^{ji}\left(\eta_{ik}^{\mu_{AB}}-\eta_{jk}^{\mu_{AB}}\right) > 0 , \label{eq:simplifiedLinearCondition}
\end{align}
where the terms $\eta_{ij}^{\mu_{AB}}$ arise as the unique solution to
\begin{align}\label{eq:etaSystemLinear}
\eta_{ij}^{\mu_{AB}} =
\begin{cases}
\displaystyle \mu_{i} + \mu_{j} + \sum_{\left(R,\alpha\right)}p_{\left(R,\alpha\right)}^{\circ}\eta_{\widetilde{\alpha}\left(i\right)\widetilde{\alpha}\left(j\right)}^{\mu_{AB}} & \displaystyle i\neq j , \\
& \\
\displaystyle 0 & \displaystyle i = j  .
\end{cases}
\end{align}

\subsection{Graph-structured populations with death-Birth updating}
Evolutionary games on a graphs provide well-known models for social interactions in structured populations \citep{santos2005scale,ohtsuki:Nature:2006,taylor:Nature:2007,szabo:PR:2007,santos:Nature:2008,chen:AAP:2013,debarre:NC:2014,allen:Nature:2017,allen2019isothermal,mcavoy:NHB:2020}. As in \sect{const_fecund_graphs}, we suppose that the population structure is represented by a weighted, undirected graph with adjacency matrix $\left(w_{ij}\right)_{i,j=1}^{N}$. We adopt the notation of \sect{const_fecund_graphs} for the weighted degree $w_i \coloneqq \sum_{j=1}^{N} w_{ij}$ and the step probability $p_{ij}^{\left(1\right)} \coloneqq w_{ij}/w_i$.

As in \sect{const_fecund_dB}, we consider death-Birth updating: an individual is chosen uniformly at random for death, and then a neighboring individual is chosen, with probability proportional to the product of fecundity and edge weight, to reproduce into the vacancy. With this update rule, the marginal probability that $i$ transmits an offspring to $j$ is
\begin{align}
e_{ij}\left(\vx\right) &= \frac{1}{N} \frac{F_{i}\left(\vx\right) w_{ij}}{\sum_{\ell =1}^{N} F_{\ell}\left(\vx\right) w_{\ell j}} .
\end{align}
Differentiating with respect to $\delta$ at $\delta =0$ gives
\begin{align}
\nonumber
\frac{d}{d\delta}\Bigg\vert_{\delta =0} e_{ij}\left(\vx\right) &= \frac{1}{N} \frac{u_{i}\left(\vx\right) w_{ij} w_{j}-w_{ij}\sum_{\ell =1}^{N}u_{\ell}\left(\vx\right) w_{\ell j}}{w_{j}^{2}}\\
\label{eq:deiju}
& = \frac{p_{ji}^{\left(1\right)}}{N} \left(u_i\left(\vx\right) - \sum_{\ell=1}^N p_{j \ell}^{\left(1\right)} u_\ell\left(\vx\right) \right).
\end{align}

We now focus on a particular game called the \emph{donation game} \citep{sigmund:PUP:2010}, in which type $A$ pays a cost of $c$ to donate $b$ to every neighbor; type $B$ donates nothing and pays no cost. For $b>c>0$, this is a special case of the prisoner's dilemma, with $A$ playing the role of cooperators and $B$ playing the role of defectors.

There are two main conventions for aggregating the payoffs received from these game interactions \citep{maciejewski:PLoSCB:2014}. The first is to take the edge-weighted sum of the payoffs received from all others; this method is called \emph{accumulated payoffs}, and leads to $u_{i}\left(\vx\right) =-w_{i}cx_{i}+\sum_{k=1}^{N} w_{ik}bx_{k}$. The second is to take the edge-weighted average (i.e. to normalize the sum by the weighted degree); this method is called \emph{averaged payoffs} and leads to $u_{i}\left(\vx\right) =-cx_{i}+b\sum_{k=1}^{N} p_{ik}^{\left(1\right)}x_{k}$. In both cases, the game is additive according to the definition of the previous subsection.

Substituting the respective payoff functions into \eq{deiju} yields 
\begin{align}
    \frac{d}{d\delta}\Bigg\vert_{\delta =0} e_{ij}\left(\vx\right) 
= \sum_{k=1}^{N} c_{k}^{ij}x_{k},
\end{align}
where 
\begin{align}
\label{eq:cijk_averaged}
c_{k}^{ij} = 
\begin{cases}
\displaystyle \frac{p_{ji}^{\left(1\right)}}{N}
\left(-c\left(p_{ik}^{\left(0\right)}-p_{jk}^{\left(1\right)} \right) +b\left( p_{ik}^{\left(1\right)}-p_{jk}^{\left(2\right)}\right) \right) & \text{(averaged)} , \\
& \\
\displaystyle \frac{p_{ji}^{\left(1\right)}}{N}
\left(-c\left(p_{ik}^{\left(0\right)}w_i - p_{jk}^{\left(1\right)} w_k \right) 
+b\left( w_{ik}-\sum_{\ell=1}^N p_{j\ell}^{\left(1\right)} w_{\ell k} \right) \right) & \text{(accumulated)} .
\end{cases}
\end{align}
Above, we have used $p_{ij}^{\left(n\right)}$ to denote the probability that an $n$-step random walk from $i$ terminates at $j$; note in particular that $p_{ij}^{\left(0\right)}$ equals 1 if $i=j$ and 0 otherwise.  We further define $\eta_{\left(n\right)}^{\bxi} \coloneqq \sum_{i,j=1}^{N}\pi_{i}p_{ij}^{\left(n\right)}\eta_{ij}^{\bxi}$ as the expectation of $\eta_{ij}^{\bxi}$ when $i$ and $j$ are sampled from the two ends of a stationary $n$-step random walk, where $n \geqslant 0$.  We can then state the following result:

\begin{proposition}
\label{prop:dBgames}
For the donation game on an arbitrary weighted, connected graph, with death-Birth updating, the fixation probability of cooperators from an arbitrary configuration $\bxi$ can be expanded under weak selection as 
\begin{align}
\label{eq:db_heterogeneous_avg}
\rho_{A}\left(\bxi\right) = \widehat{\xi} + \frac{\delta}{N} \left( -c \eta^{\bxi}_{\left(2\right)} + b \left(\eta^{\bxi}_{\left(3\right)}- \eta^{\bxi}_{\left(1\right)} \right)  \right)  + O\left(\delta^{2}\right),
\end{align}
for averaged payoffs, and 
\begin{align}
\rho_{A}\left(\bxi\right) 
= \widehat{\xi} +  \frac{\delta}{N} &\Bigg( 
-c \sum_{i,j=1}^{N} \pi_{i} \left(p_{ij}^{\left(2\right)} - p_{ij}^{\left(0\right)} \right) w_{j} \eta_{ij}^{\bxi} \nonumber \\
&\qquad +b\sum_{i,j,k=1}^{N} \pi_{i} \left(p_{ij}^{\left(2\right)} - p_{ij}^{\left(0\right)} \right) w_{jk} \eta_{ik}^{\bxi}  \Bigg) + O\left(\delta^{2}\right), \label{eq:db_heterogeneousFP}
\end{align}
for accumulated payoffs. In both cases, the terms $\eta_{ij}^{\bxi}$ arise as the unique solution to 
\begin{subequations}\label{eq:dbTau}
\begin{align}
\eta_{ij}^{\bxi} &= \frac{N}{2}\left(\widehat{\xi}-\xi_{i}\xi_{j}\right) + \frac{1}{2}\sum_{k=1}^{N} \left( p_{ik}^{\left(1\right)}\eta_{kj}^{\bxi} + p_{jk}^{\left(1\right)}\eta_{ik}^{\bxi} \right) \quad \left(i \neq j\right) ; \\
\eta_{ii}^{\bxi} &= N\left(\widehat{\xi}-\xi_{i}\right) + \sum_{j=1}^{N}p_{ij}^{\left(1\right)}\eta_{jj}^{\bxi} ; \\
\label{eq:dbTausum}
\sum_{i=1}^{N}\pi_{i}\eta_{ii}^{\bxi} &= 0 .
\end{align}
\end{subequations}
\end{proposition}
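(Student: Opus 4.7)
The plan is to specialize Theorem \ref{thm:main} to the donation game under death-Birth updating. First I would observe that the process has degree $D=1$: since both averaged and accumulated payoffs $u_k(\vx)$ are linear (multilinear-degree one) in $\vx$, the chain rule identity preceding \eq{deiju} shows that $\frac{d}{d\delta}\vert_{\delta=0} e_{ij}(\vx)$ is also linear in $\vx$. Moreover, the all-$B$ fecundities equal $1$ for both conventions, so by the discussion in \sect{additive} the constant Fourier coefficient $c_\varnothing^{ij}$ vanishes and the inner sum over $I$ in \eq{derivative} collapses to a sum over singletons $I=\{k\}$. Next I would derive the recurrence \eq{dbTau} directly from \eq{generalRecurrence}: under neutral death-Birth, the replacement event consists of a uniformly chosen dying site $m$ and a parent $k$ with probability $\frac{1}{N} p_{mk}^{(1)}$. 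For $|I|=1$, splitting according to whether $m=i$ gives the singleton clause after moving the $\frac{N-1}{N}\eta_i^{\bxi}$ term to the left. For $|I|=2$ with $i\neq j$, splitting according to whether $m\in\{i,j\}$, followed by moving the $\frac{N-2}{N}\eta_{ij}^{\bxi}$ term and rescaling by $N/2$, produces the two-site clause. Uniqueness of the solution follows exactly as in Theorem \ref{thm:main}, since the constraint \eq{dbTausum} matches \eq{generalRecurrenceb}.

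The heart of the proof is a bookkeeping calculation that reduces
\[
\sum_{i,j,k=1}^{N} \pi_{i}\, c_k^{ji}\!\left(\eta_{ik}^{\bxi} - \eta_{jk}^{\bxi}\right)
\]
to the stated form. I would substitute the averaged-payoff coefficient $c_k^{ji}=\frac{p_{ij}^{(1)}}{N}(-c(p_{jk}^{(0)}-p_{ik}^{(1)})+b(p_{jk}^{(1)}-p_{ik}^{(2)}))$ from \eq{cijk_averaged} and expand into eight cross-terms (two choices of $p$-factor, two choices of $\eta$-factor, times the $-c$ and $b$ contributions). Each term collapses via four identities: stochasticity $\sum_{j}p_{ij}^{(1)}=1$; random-walk composition $\sum_{j}p_{ij}^{(m)}p_{jk}^{(n)}=p_{ik}^{(m+n)}$; detailed balance $\pi_{i}p_{ij}^{(1)}=\pi_{j}p_{ji}^{(1)}$, which holds because $\pi_i = w_i/\sum_\ell w_\ell$ under death-Birth; and the consequence $\sum_{j}\pi_j \eta_{jj}^{\bxi}=0$ of \eq{dbTausum} (which zeroes the term where a $p^{(0)}$ collides with a $\pi p^{(1)}$ factor). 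After cancellations, the $-c$-contribution telescopes to $-\frac{c}{N}\eta^{\bxi}_{(2)}$ and the $b$-contribution to $\frac{b}{N}(\eta^{\bxi}_{(3)}-\eta^{\bxi}_{(1)})$, which is \eq{db_heterogeneous_avg}.

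The main obstacle is the accumulated-payoff case: the coefficients in \eq{cijk_averaged} carry weighted degrees $w_i,\,w_j,\,w_{\ell k}$ that do not telescope under random-walk composition, so the answer cannot be expressed in terms of the scalar invariants $\eta^{\bxi}_{(n)}$ alone. I would handle this by using the edge-weight identity $w_{ij}=w_i p_{ij}^{(1)}=w_j p_{ji}^{(1)}$ to extract a single $p^{(1)}$ factor and then apply detailed balance once to move the $\pi$-weight from $i$ onto $j$; the remaining sums over $k$ of the $-c$-piece produce the $w_j\eta_{ij}^{\bxi}$ term, and the $b$-piece produces the $w_{jk}\eta_{ik}^{\bxi}$ term, both multiplied by the factor $\pi_i(p_{ij}^{(2)}-p_{ij}^{(0)})$ that arises from combining the two inner random-walk steps minus the $k=i$ diagonal contribution killed by \eq{dbTausum}. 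This yields exactly \eq{db_heterogeneousFP} and completes the proof.
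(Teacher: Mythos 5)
Your proposal is correct and follows essentially the same route as the paper's proof: specialize \tthm{main} to this degree-one process, recognize \eq{dbTau} as the specialization of \eq{generalRecurrence} to neutral death-Birth updating, substitute the Fourier coefficients from \eq{cijk_averaged}, and simplify using walk composition, stochasticity, reversibility $\pi_{i}p_{ij}^{\left(n\right)}=\pi_{j}p_{ji}^{\left(n\right)}$, and $\eta_{\left(0\right)}^{\bxi}=0$, which is exactly the paper's (terser) argument with the algebra written out. The only quibble is your remark that in the accumulated case the $k=i$ diagonal contribution is ``killed by \eq{dbTausum}''---it is not killed and need not be: it survives precisely as the $-p_{ij}^{\left(0\right)}$ terms in \eq{db_heterogeneousFP}, which your final formula correctly retains.
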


\begin{proof} \tthm{main} gives
\begin{align}
\rho_{A}\left(\bxi\right) = \widehat{\xi} + \delta \sum_{i,j,k=1}^{N} \pi_{i} c_{k}^{ji}\left(\eta_{ik}^{\bxi}-\eta_{jk}^{\bxi}\right) + O\left(\delta^{2}\right) ,
\end{align}
where the $\eta_{ij}^{\bxi}$ are the unique solution to \eq{dbTau}.  The result then follows from applying \eq{cijk_averaged} and simplifying using the reversibility property $\pi_i p^{\left(n\right)}_{ij} = \pi_j p^{\left(n\right)}_{ji}$, noting that $\eta^{\bxi}_{\left(0\right)} = \sum_{i=1}^N \pi_i \eta^{\bxi}_{ii} = 0$ by \eq{dbTausum}.
\end{proof}

\prop{dBgames} generalizes one of the main results of \cite{allen:Nature:2017} (who considered only uniform initialization) to the case of an arbitrary initial state.

\subsubsection{Homogeneous (regular) graphs}
In the case of a regular graph, we can obtain the weak-selection expansion of fixation probabilities in closed form. Suppose the graph is unweighted (meaning each edge weight is either 0 or 1), has no self-loops ($w_{ii}=0$ for each $i$) and is regular of degree $d$ ($w_i=d$ for all $i$). For regular graphs, accumulated and averaged payoffs are equivalent upon rescaling all payoffs by a factor of $d$; we consider averaged payoffs here.

Noting that $\pi_i=1/N$ for all $i$, and $\widehat{\xi}=\left|\bxi\right| /N$, for regular graphs, \eq{dbTau} for $\eta^{\bxi}_{ij}$ can be written as
\begin{subequations}
\begin{align}
\eta_{ij}^{\bxi} &= \frac{1}{2}\left(\left|\bxi\right| -N\xi_{i}\xi_{j}\right) + \frac{1}{2}\sum_{k=1}^{N} \left( p_{ik}^{\left(1\right)}\eta_{kj}^{\bxi} + p_{jk}^{\left(1\right)}\eta_{ik}^{\bxi} \right) \\
\nonumber & \qquad + \frac{\delta_{ij}}{2} \left( \left|\bxi\right| -N\xi_{i} + 2\sum_{k=1}^N p_{ik}^{\left(1\right)} \left(\eta_{kk}^{\bxi}-\eta_{ik}^{\bxi}\right)\right),\\
\label{eq:etasum_regular}
\sum_{i=1}^{N}\eta_{ii}^{\bxi} &= 0,
\end{align}
\end{subequations}
where $\delta_{ij}$ is the Kronecker delta function. Multiplying by $\frac{1}{N} p_{ij}^{\left(n\right)}$ and summing over $i$ and $j$ leads to a recurrence relation for $\eta^{\bxi}_{(n)}$:
\begin{align}
\label{eq:etanrecur}
\eta^{\bxi}_{(n+1)} - \eta^{\bxi}_{\left(n\right)} &= 
\frac{1}{2} \sum_{i,j=1}^N p_{ij}^{\left(n\right)} \xi_i \xi_j - \frac{1}{2} \left|\bxi\right| \nonumber \\
&\qquad + \frac{1}{2N}\sum_{i=1}^N p_{ii}^{\left(n\right)}\left(N\xi_i-\left|\bxi\right| + 2\sum_{k=1}^N p_{ik}^{\left(1\right)} \left(\eta_{ik}^{\bxi}-\eta_{kk}^{\bxi}\right)\right).
\end{align}
We now let $n\to \infty$, taking a running average in the case that the random walk is periodic. In this limit, $p_{ij}^{\left(n\right)}$ converges to $\pi_j = 1/N$ for each $i$ and $j$. Simplifying and applying \eq{etasum_regular}, \eq{etanrecur} then becomes
\begin{align}
0 & = \frac{1}{2N} \sum_{i,j=1}^N \xi_i\xi_j - \frac{1}{2} \left|\bxi\right| + \frac{1}{2N^2} \sum_{i=1}^N \left(N \xi_i - \left|\bxi\right|\right) + \frac{1}{N^{2}}\left(\sum_{i,k=1}^N p_{ik}^{\left(1\right)} \eta_{ik}^{\bxi}-\sum_{k=1}^N\eta_{kk}^{\bxi}\right) \nonumber \\
& = \frac{1}{2N} \left|\bxi\right|^2 - \frac{1}{2} \left|\bxi\right| + \frac{1}{N} \eta^{\bxi}_{\left(1\right)} ,
\end{align}
which leads to
\begin{align}
    \eta^{\bxi}_{\left(1\right)} = \frac{1}{2}\left|\bxi\right| \left(N-\left|\bxi\right|\right).
\end{align}
Applying \eq{etanrecur}, and noting that $p_{ii}^{\left(2\right)} = 1/d$ for each $i$, we see that
\begin{subequations}
\begin{align}
\eta^{\bxi}_{\left(2\right)} & =  \frac{1}{2} \left( \left|\bxi\right|\left(N-\left|\bxi\right| -1\right) +  \sum_{i,j=1}^N p_{ij}^{\left(1\right)} \xi_i \xi_j \right),\\
\eta^{\bxi}_{\left(3\right)} & = \frac{1}{2} \left( \left|\bxi\right|\left(\frac{d+1}{d} \left(N-\left|\bxi\right|\right)-2\right) +  \sum_{i,j=1}^N \left(p_{ij}^{\left(1\right)} +p_{ij}^{\left(2\right)} \right)\xi_i \xi_j \right).
\end{align}
\end{subequations}
Substituting into \eq{db_heterogeneous_avg}, we obtain the following closed-form result (a corollary to \prop{dBgames}):
\begin{corollary}
\label{cor:dBgamesregular}
For the donation game on an unweighted regular graph with death-Birth updating, the fixation probability of cooperators from arbitrary initial configuration $\bxi$ can be expanded under weak selection as
\begin{align}
\rho_{A}\left(\bxi\right) = \frac{\left|\bxi\right|}{N} + &\frac{\delta}{2N} \Bigg( -c \left(\left|\bxi\right| \left(N-\left|\bxi\right| -1\right) + \sum_{i,j=1}^N p_{ij}^{\left(1\right)} \xi_i \xi_j\right) \nonumber \\
&\qquad +b \left( \left|\bxi\right| \left(\frac{N-\left|\bxi\right|}{d}-2\right) +  \sum_{i,j=1}^N \left(p_{ij}^{\left(1\right)} +p_{ij}^{\left(2\right)} \right)\xi_i \xi_j \right)  \Bigg)  + O\left(\delta^{2}\right) . \label{eq:main_chen2016}
\end{align}
\end{corollary}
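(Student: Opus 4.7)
The plan is to specialize Proposition~\ref{prop:dBgames} (averaged payoffs) to regular graphs and reduce the computation of $\rho_A(\bxi)$ to evaluating $\eta^{\bxi}_{(1)}$, $\eta^{\bxi}_{(2)}$, and $\eta^{\bxi}_{(3)}$ in closed form. On a $d$-regular graph, we have $\pi_i = 1/N$ for all $i$, $\widehat{\xi} = |\bxi|/N$, and the stationary distribution of the simple random walk is uniform. These symmetries should collapse \eq{dbTau} into a recurrence that depends on $i,j$ only through the random-walk kernel.

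First, I would unify the three cases of \eq{dbTau} into a single equation valid for all pairs $(i,j)$, by writing the $i=j$ case as a Kronecker-delta correction to the $i \ne j$ case. Concretely, the excerpt notes that
\begin{align}
\eta_{ij}^{\bxi} &= \tfrac{1}{2}\bigl(|\bxi| - N\xi_i\xi_j\bigr) + \tfrac{1}{2}\sum_{k=1}^N \bigl( p_{ik}^{(1)}\eta_{kj}^{\bxi} + p_{jk}^{(1)}\eta_{ik}^{\bxi}\bigr) \nonumber \\
&\quad + \tfrac{\delta_{ij}}{2}\Bigl(|\bxi| - N\xi_i + 2\sum_{k=1}^N p_{ik}^{(1)}\bigl(\eta_{kk}^{\bxi} - \eta_{ik}^{\bxi}\bigr)\Bigr).
\end{align}
The next step is to contract this identity by multiplying through by $\frac{1}{N}p_{ij}^{(n)}$ and summing over $i,j$, producing a one-step recurrence for $\eta^{\bxi}_{(n)} \coloneqq \frac{1}{N}\sum_{i,j}p_{ij}^{(n)}\eta_{ij}^{\bxi}$ of the form displayed in \eq{etanrecur}. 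The symmetries $\pi_i p_{ij}^{(1)} = \pi_j p_{ji}^{(1)}$ and $\pi_i = 1/N$ are what make the double sums telescope into terms involving only $\eta^{\bxi}_{(n)}$, $\eta^{\bxi}_{(n+1)}$, and boundary data built from $\bxi$ and the walk kernel.

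To obtain a base case, I would let $n \to \infty$ in this recurrence. For bipartite regular graphs the walk is periodic, so I would instead take a Cesàro average, under which $p_{ij}^{(n)} \to 1/N$ for every pair $i,j$. The left side tends to $0$, the kernel averages collapse to $|\bxi|^2/N^2$, and the normalization $\sum_i \eta_{ii}^{\bxi} = 0$ kills the diagonal contribution. Solving yields $\eta^{\bxi}_{(1)} = \tfrac{1}{2}|\bxi|(N-|\bxi|)$. Plugging this back into the recurrence at $n=1$ (where $p_{ii}^{(1)}=0$ since the graph has no self-loops) gives the claimed formula for $\eta^{\bxi}_{(2)}$, and plugging in at $n=2$ (using $p_{ii}^{(2)} = 1/d$, which holds because a two-step walk on a $d$-regular simple graph returns with probability $1/d$) gives the formula for $\eta^{\bxi}_{(3)}$. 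Substituting these three quantities into \eq{db_heterogeneous_avg} yields \eq{main_chen2016}.

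The main obstacle is the careful handling of the diagonal $\eta_{ii}^{\bxi}$: the equation defining it is structurally different from the $i \ne j$ equation, and an unthinking combination loses the $\delta_{ij}$ correction, which in turn is precisely what produces the $-2|\bxi|$ and $-1$ terms distinguishing $\eta^{\bxi}_{(3)}$ and $\eta^{\bxi}_{(2)}$ from the naive guess. A secondary technical point is justifying the $n\to\infty$ step for bipartite graphs via Cesàro averaging; once one observes that the recurrence is linear and $\eta^{\bxi}_{(n+1)} - \eta^{\bxi}_{(n)}$ already averages to zero in this sense, the argument goes through uniformly and yields the three closed-form identities that drive the corollary.
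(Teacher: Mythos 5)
Your proposal is correct and follows essentially the same route as the paper: fold the diagonal case of \eq{dbTau} into the off-diagonal one via a Kronecker-delta correction, contract with $\tfrac{1}{N}p_{ij}^{\left(n\right)}$ to get \eq{etanrecur}, obtain $\eta^{\bxi}_{\left(1\right)}=\tfrac{1}{2}\left|\bxi\right|\left(N-\left|\bxi\right|\right)$ from the $n\to\infty$ (Ces\`{a}ro) limit, then evaluate the recurrence at $n=1$ and $n=2$ (using $p_{ii}^{\left(1\right)}=0$ and $p_{ii}^{\left(2\right)}=1/d$) and substitute into \eq{db_heterogeneous_avg}. One small attribution slip in your closing remark: the $-1$ and $-2$ offsets actually come from the $-\tfrac{1}{2}\left|\bxi\right|$ term in the main recurrence, while the $\delta_{ij}$ correction (together with $\sum_{i}p_{ik}^{\left(1\right)}=1$ and $\sum_{k}\eta^{\bxi}_{kk}=0$) is what contributes the $\left(N-\left|\bxi\right|\right)/d$ enhancement in $\eta^{\bxi}_{\left(3\right)}$ via $\eta^{\bxi}_{\left(1\right)}/d$.
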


\cor{dBgamesregular} is equivalent to the main result of \citet{chen:SR:2016}. In particular, when the initial state contains only a single type $A$ individual ($\left|\bxi\right| =1$), we have $\sum_{i,j=1}^N p_{ij}^{\left(1\right)} \xi_i \xi_j=0$ and $\sum_{i,j=1}^N p_{ij}^{\left(1\right)} \xi_i \xi_j=\sum_{i=1}^N p_{ii}^{\left(2\right)} \xi_i = 1/d$, leading to 
\begin{align}
\rho_{A}\left(\bxi\right) = \frac{1}{N} + \frac{\delta}{2N} \left( -c \left(N-2\right) +b \left( \frac{N}{d}-2  \right) \right) + O\left(\delta^{2}\right).
\end{align}
This result holds regardless of which vertex contains the initial type $A$ individual, as was first proven by \cite{chen:AAP:2013}.

\subsection{Comparing population structures}\label{sec:comparingPopulationStructures}
A large body of literature is devoted to the question of whether---and to what extent---population structure can promote the evolution of cooperation \citep{nowak:Nature:1992,hauert:Nature:2004,santos2005scale,ohtsuki:Nature:2006,taylor:Nature:2007,santos:Nature:2008,nowak:PTRSB:2009,tarnita:JTB:2009,debarre:NC:2014,allen:Nature:2017}. The donation game with $b>c>0$ provides an elegant model for studying this question \citep{sigmund:PUP:2010}: Type $A$ (representing cooperation) pays cost $c$ to give benefit $b$ to its partners; type $B$ pays no costs and gives no benefits. Population structures can then be compared according to whether or not they increase $A$'s chance of becoming fixed, depending on the benefit, $b$, and cost, $c$. Each population structure has a ``critical benefit-to-cost ratio,'' $\left(b/c\right)^{\ast}$, such that weak selection increases $A$'s fixation probability if and only if $\left(b/c\right)^{\ast}>0$ and $b/c>\left(b/c\right)^{\ast}$ \citep{ohtsuki:Nature:2006,nowak:PTRSB:2009,allen:Nature:2017}. 

A lower critical benefit-to-cost can then be interpreted as ``better for the evolution of cooperation'' \citep{nathanson:PLOSCB:2009}. Such quantities have also been used to formally order population structures. For example, \citet{pena:JRSI:2016} state that ``two different models of spatial structure and associated evolutionary dynamics can be unambiguously compared by ranking their relatedness or structure coefficients: the greater the coefficient, the less stringent the conditions for cooperation to evolve. Hence, different models of population structure can be ordered by their potential to promote the evolution of cooperation in a straightforward way.'' While there is indeed an unambiguous comparison of population structures based on critical benefit-to-cost ratios, a comparison based on which is ``better for the evolution of cooperation'' is more subtle.

Consider the donation game with accumulated payoffs on a graph, with death-Birth updating and uniform mutant-appearance distribution. By \eq{db_heterogeneousFP},
\begin{align}
\E_{\textrm{unif}} \left[ \rho_{A} \right]
= \frac{1}{N} +  \frac{\delta}{N} &\Bigg( 
-c \sum_{i,j=1}^{N} \pi_{i} p_{ij}^{\left(2\right)} w_{j} \eta_{ij}^{\textrm{unif}} \nonumber \\
&\qquad +b\sum_{i,j,k=1}^{N} \pi_{i} \left(p_{ij}^{\left(2\right)} - p_{ij}^{\left(0\right)} \right) w_{jk} \eta_{ik}^{\textrm{unif}} \Bigg) + O\left(\delta^{2}\right) ,
\end{align}
where 
\begin{align}
\eta_{ij}^{\textrm{unif}} =
\begin{cases}
\displaystyle \frac{1}{2}\left(1 + \sum_{k=1}^{N} \left( p_{ik}^{\left(1\right)}\eta_{kj}^{\textrm{unif}}+p_{jk}^{\left(1\right)}\eta_{ik}^{\textrm{unif}} \right) \right)
& \displaystyle i\neq j , \\
& \\
\displaystyle 0 & \displaystyle i =j .
\end{cases}
\end{align}
The critical benefit-to-cost ratio is
\begin{align}
\left(\frac{b}{c}\right)^\ast
= \frac{\sum_{i,j=1}^{N} \pi_{i} p_{ij}^{\left(2\right)} w_{j} \eta_{ij}^{\textrm{unif}}}{\sum_{i,j,k=1}^{N} \pi_{i} \left(p_{ij}^{\left(2\right)} - p_{ij}^{\left(0\right)} \right) w_{jk} \eta_{ik}^{\textrm{unif}}}.
\end{align}
In \fig{comparing}, we apply this result to first compute the critical benefit-to-cost ratios for two heterogeneous population structures of size $N=50$. Specifically, we give examples of graphs $\Gamma_{1}$ (\fig{comparing}\textbf{A}) and $\Gamma_{2}$ (\fig{comparing}\textbf{B}) such that $0<\left(b/c\right)_{\Gamma_{1}}^{\ast}<\left(b/c\right)_{\Gamma_{2}}^{\ast}$, which means that the condition for cooperation to be favored on $\Gamma_{1}$ is less strict than that of $\Gamma_{2}$. However, this ranking alone does not imply that $\Gamma_{1}$ is unambiguously better for the evolution of cooperation than $\Gamma_{2}$. For example, when $b=10$ and $c=1$, which corresponds to $b/c>\left(b/c\right)_{\Gamma_{1}}^{\ast},\left(b/c\right)_{\Gamma_{2}}^{\ast}$, weak selection boosts the fixation probability of cooperators on $\Gamma_{2}$ more than it does on $\Gamma_{1}$, based on the magnitudes of the first-order effects of selection. Therefore, for this particular cooperative social dilemma, $\Gamma_{2}$ more strongly supports the evolution of cooperation than $\Gamma_{1}$. It follows that the critical benefit-to-cost ratio provides only part of the story when comparing two population structures based on their abilities to support the emergence of cooperation.

\begin{figure}
	\centering
	\includegraphics[width=0.9\textwidth]{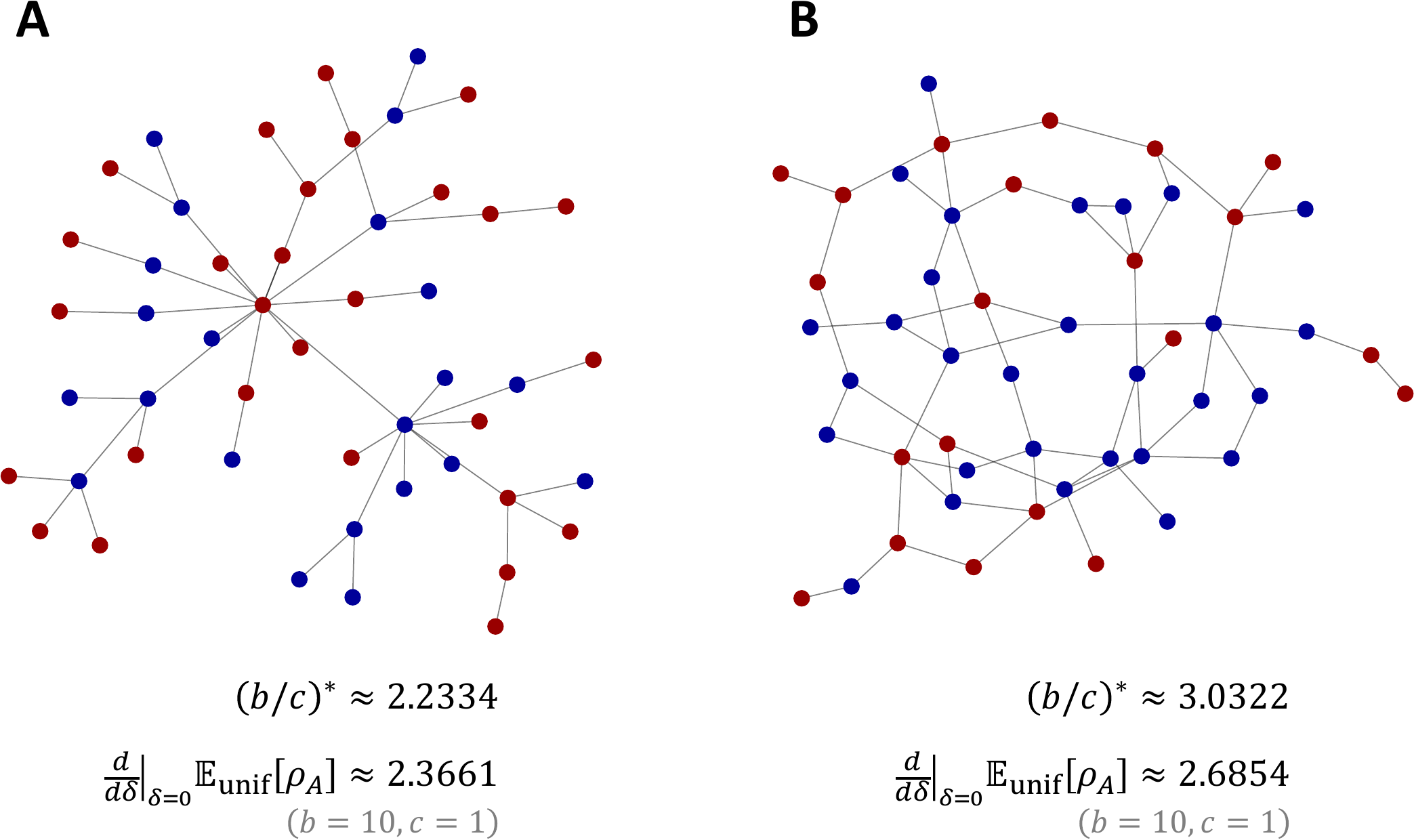}
	\caption{Two heterogeneous population structures of size $N=50$ evolving based on death-Birth updating. The population structure depicted in \textbf{B} is unequivocally better for the evolution of cooperation than that of \textbf{A} when $b=10$ and $c=1$ because selection results in a greater improvement to a rare cooperator's fixation probability. This result holds despite the fact that the critical benefit-to-cost ratio of \textbf{B} is greater than it is for \textbf{A}, which means that the condition under which cooperators can thrive on \textbf{B} is stricter than that of \textbf{A}. It follows that a comparison between population structures based on which one is better for the evolution of cooperation cannot be made based on the critical ratios alone.\label{fig:comparing}}
\end{figure}

\section{Discussion}
In this study, we have analyzed the fixation probability of a mutant type under weak selection, for a broad class of evolutionary models and arbitrary initial conditions. The main result, \tthm{main}, gives a first-order expansion of the fixation probability of $A$, $\rho_{A}\left(\bxi\right)$, in the selection intensity $\delta$, for any initial configuration, $\bxi$. This expansion has three main ingredients: \emph{(i)} reproductive value, $\pi_{i}$, which quantifies the expected contribution of $i$ to future generations; \emph{(ii)} neutral sojourn times, $\eta_{I}^{\bxi}$, which may be interpreted in terms of the mean number of steps in which all individuals in $I$ have type $A$ prior to absorption, given that the initial state of the population is $\bxi$; and \emph{(iii)} Fourier coefficients, $c_{I}^{ij}$, of first-order effects of the probability that $i$ replaces $j$ in one update step.

It follows from \tthm{main} that the complexity of calculating this first-order expansion is $O\left(N^{3\left(D+1\right)}\right)$, where $D$ is the degree of the process. Actually, by the work of \citet{legall:IEEE:2012}, this complexity is (in theory) $O\left(N^{2.373\left(D+1\right)}\right)$. This bound can be further improved in some cases by taking into account structural properties of the population, as we observed in the case of death-Birth updating on a regular graph. In any case, for fixed degree $D$, the system size exhibits polynomial growth in $N$, whereas the number of states in the evolutionary process grows exponentially in $N$.

The neutral sojourn times, $\eta_{I}^{\bxi}$ and variants thereof, play a central role in our method. Their interpretation is therefore a question of interest. From \eq{etadef}, we can see that $-\eta_{I}^{\bxi}=\left \langle \vx_{I} - \widehat{x} \right \rangle_{\bxi}^{\circ}$ is a measure of the tendency for all individuals in $I$ have type $A$, under neutral drift from initial state $\bxi$. In the case of a uniform initial distribution $\mu_A$ (\ex{uniform}), $\eta_{I}^{\mu_A}$ is proportional to the expected time for the coalescent process $\mathcal{C}$ to reach a singleton set (coalesce) starting from set $I$.

The utility of our framework is illustrated by the application to evolutionary dynamics on graphs in Sections \ref{sec:constant} and \ref{sec:examples}. In particular, \prop{dBgames} provides the weak-selection expansion of fixation probabilities for the donation game with arbitrary graph and initial configuration. This result unifies and generalizes the main results of \cite{chen:SR:2016} (who considered only  regular graphs) as well as \cite{allen:Nature:2017} (who considered only uniform initialization). 

From these results, one can derive many of the well-known results on critical benefit-to-cost ratios for cooperation to be favored in social dilemmas \citep{ohtsuki:Nature:2006,taylor:Nature:2007,chen:AAP:2013,allen:EMS:2014,fotouhi:NHB:2018}. Moreover, they provide more information than just \emph{when} weak selection favors a particular trait; they also determine \emph{how much}, based on the magnitude of $\frac{d}{d\delta}\Big\vert_{\delta =0}\rho_{A}$, which can lead to more nuanced comparisons of population structures based on their ability to promote a trait (\sect{comparingPopulationStructures}). The magnitude of $\frac{d}{d\delta}\Big\vert_{\delta =0}\rho_{A}$ has been explored considerably less than its sign, and our results allow this question to be explored for quite a large class of evolutionary update rules, population structures, and initial configurations.

Our results on fixation probabilities apply to finite populations of a given size. It would be interesting to connect these results to the considerable body of theory for large populations \citep{kimura:G:1962,roze2003selection,traulsen2006stochasticity,cox2016evolutionary,chen2018wright}, by analyzing the large-population ($N \to \infty$) asymptotics of our results such as \eq{derivative}. However, a number of challenges arise. First, unless one places a bound on the degree of the process, the number of terms in first-order part of \eq{derivative} grows exponentially with $N$. Second, since \eq{derivative} is itself an expansion for weak selection ($\delta \ll 1$), one must consider the relationship between $N$ and $\delta$ as $N \to \infty$ and $\delta \to 0$.  These two limits are known to be non-interchangeable, even in the relatively simple case of two-player games in a well-mixed population \citep{sample2017limits}. However, neither of these challenges appears insurmountable, and addressing them is an important goal for future work.

\section*{Acknowledgments}
We thank Krishnendu Chatterjee, Joshua Plotkin, Qi Su, and John Wakeley for helpful discussions and comments on earlier drafts. We are also grateful to the anonymous referees for many valuable suggestions. This work was supported by the Army Research Laboratory (grant W911NF-18-2-0265), the John Templeton Foundation (grant 61443), the National Science Foundation (grant DMS-1715315), the Office of Naval Research (grant N00014-16-1-2914), and the Simons Foundation (Math+X Grant to the University of Pennsylvania).

\begin{longtable}{cp{10cm}c}
\caption{Glossary of Notation\label{tab:notationtable}} \\
\hline
Symbol & Description & Introduced\\
\hline
\endhead
\hline \endfoot
$\vA$ & Monomorphic state in which all individuals have type $A$ & \sect{modeling_assumptions} \\
$\alpha$ & Parentage map in a replacement event & \sect{modeling_assumptions} \\
$\widetilde{\alpha}$ & Extension of the parentage map, $\alpha$, to $\left\{1,\dots ,N\right\}$ & \sect{modeling_assumptions} \\
$\vB$ & Monomorphic state in which all individuals have type $B$ & \sect{modeling_assumptions} \\
$\mathbb{B}$ & Boolean domain $\left\{0,1\right\}$ ($1$ for $A$, $0$ for $B$) & \sect{modeling_assumptions} \\
$\mathbb{B}^{N}$ & Set of configurations of types in the population & \sect{modeling_assumptions} \\
$\mathbb{B}_{\T}^{N}$ & Set of non-monomorphic configurations of types in the population, i.e. $\mathbb{B}^{N}-\left\{\vA ,\vB\right\}$ & \sect{modeling_assumptions} \\
$b_{i}\left(\vx\right)$ & Expected offspring number of $i$ in state $\vx$ & \sect{quantifying_selection} \\
$\left(b/c\right)^{\ast}$ & Critical benefit-to-cost ratio for cooperation to evolve & \sect{comparingPopulationStructures} \\
$c_{I}^{ij}$ & Fourier coefficients of $\frac{d}{d\delta}\Big\vert_{\delta =0}e_{ij}\left(\vx\right)$ & \sect{quantifying_selection} \\
$\delta$ & Selection strength & \sect{modeling_assumptions} \\
$d_{i}\left(\vx\right)$ & Death probability of $i$ in state $\vx$ & \sect{quantifying_selection} \\
$\delsel\left(\vx\right)$ & Expected change in the frequency of $A$ due to selection & \sect{quantifying_selection} \\
$D_{ij}$ & Fourier degree of $\frac{d}{d\delta}\Big\vert_{\delta =0}e_{ij}\left(\vx\right)$ as a multi-linear polynomial & \sect{quantifying_selection} \\
$D$ & Fourier degree of the evolutionary process at $\delta =0$ & \sect{quantifying_selection} \\
$e_{ij}\left(\vx\right)$ & Marginal probability that $i$ transmits its offspring to $j$ in state $\vx$ & \sect{quantifying_selection} \\
$\mu_{A}$, $\mu_{B}$ & Mutant-appearance distributions of types $A$ and $B$, respectively & \sect{mutappear} \\
$\mu_{i}$ & Probability that a single mutant appears at location $i$ & \sect{comparing} \\
$m_{k}^{ij}$ & Marginal effect of the fecundity of $k$ on $i$ replacing $j$ & \sect{examples} \\
$N$ & Number of individuals (population size) & \sect{modeling_assumptions} \\
$p_{\left(R,\alpha\right)}\left(\vx\right)$ & Probability of choosing replacement event $\left(R,\alpha\right)$ in state $\vx$ & \sect{modeling_assumptions} \\
$\pi_{i}$ & Reproductive value of location $i$ & \sect{quantifying_selection} \\
$\pi_{\MSS\left(\bxi\right)}$ & Mutation-selection stationary distribution (which depends on a fixed configuration, $\bxi$) & \sect{sojourn} \\
$p_{ij}^{\left(n\right)}$ & Probability of moving from vertex $i$ to vertex $j$ in $n$ steps of a random walk on a graph & \sect{const_fecund_graphs} \\
$R$ & Set of replaced positions in a replacement event & \sect{modeling_assumptions} \\
$r$ & Relative reproductive rate of a mutant in a constant-fecundity process & \sect{constant} \\
$\rho_{A}$, $\rho_{B}$ & Fixation probabilities of types $A$ and $B$, respectively & \sect{quantifying_selection} \\
$s$ & Selection coefficient of a mutant in a constant-fecundity process & \sect{constant} \\
$u$ & Probability of regenerating a fixed transient state, $\bxi$, following $\vA$ or $\vB$ & \sect{sojourn} \\
$u_{i}\left(\vx\right)$ & Payoff to $i$ in state $\vx$ in an evolutionary game & \sect{examples} \\
$w_{ij}$ & Edge weight between vertices $i$ and $j$ in a graph & \sect{const_fecund_graphs} \\
$w_{i}$ & Degree of vertex $i$ in a graph & \sect{const_fecund_graphs} \\
$\vx$ & Configuration of types in the population & \sect{modeling_assumptions} \\
$x_{i}$ & Type occupying $i$ ($1$ for $A$, $0$ for $B$) & \sect{modeling_assumptions} \\
$\bm{\xi}$ & Initial configuration of types in the population & \sect{quantifying_selection} \\
\hline
${}^{\circ}$ & Indicates the absence of selection & \sect{modeling_assumptions} \\
$\widehat{}$ & Indicates weighting by reproductive values & \sect{quantifying_selection}
\end{longtable}

\end{document}